\title{Opacity Enforcing Supervisory Control using Non-deterministic Supervisors}
\author{Yifan~Xie,~\IEEEmembership{Student Member,~IEEE,}
	Xiang~Yin,~\IEEEmembership{Member,~IEEE,}
    Shaoyuan~Li,~\IEEEmembership{Senior Member,~IEEE}
	\thanks{This work was  supported by the National Natural Science Foundation of China (62061136004, 62173226, 61803259) and by Shanghai Jiao Tong University Scientific and Technological Innovation Funds.
	}
	\thanks{Yifan Xie, Xiang Yin and Shaoyuan Li  are with Department of Automation
		and Key Laboratory of System Control and Information Processing,
		Shanghai Jiao Tong University, Shanghai 200240, China.
		{\tt\small  \{xyfan1234,yinxiang,syli\}@sjtu.edu.cn.}
		
		(Corresponding Author: Xiang Yin)}
}
\newtheorem{mydef}{Definition}
\newtheorem{mythm}{Theorem}
\newtheorem{myprob}{Problem}
\newtheorem{mycol}{Corollary}
\newtheorem{mypro}{Proposition}
\newtheorem{myexm}{Example}
\newtheorem{remark}{Remark}
\begin{document}
	
	\maketitle

\begin{abstract}
In this paper, we investigate the enforcement of opacity via supervisory control in the context of discrete-event systems.
A system is said to be opaque if the intruder, which is modeled as a passive observer, can never infer confidently that the system is at a secret state.
The design objective is to synthesize a supervisor such that the closed-loop system is opaque even when the control policy is publicly known.
In this paper, we propose a new approach for enforcing opacity using \emph{non-deterministic supervisors}.
A non-deterministic supervisor is a decision mechanism that provides \emph{a set of control decisions} at each instant,
and randomly picks a specific control decision from the decision set to actually control the plant.
Compared with the standard deterministic control mechanism,
such a non-deterministic control mechanism can enhance the plausible deniability of the controlled system as the online control decision  is a random realization and cannot be implicitly inferred from the control policy. We provide a sound and complete algorithm for synthesizing a non-deterministic opacity-enforcing supervisor.
Furthermore, we show that non-deterministic supervisors are strictly more powerful than deterministic supervisors in the sense that
there may exist a non-deterministic opacity-enforcing supervisor even when deterministic supervisors cannot enforce opacity.
\end{abstract}

\begin{IEEEkeywords}
Opacity, Supervisory Control, Discrete Event Systems.
\end{IEEEkeywords}
	
\section{Introduction}
\label{sec:introduction}
\IEEEPARstart{I}{nformation} security and privacy have become increasingly important issues in the analysis and design of modern engineering systems due to potential malicious attacks and information leakages in networks.
In this paper, we investigate an important information-flow security property called \emph{opacity} in the context of Discrete-Event Systems (DES).
In this framework, a dynamic system is modeled as a DES and an intruder is modeled as a passive observer that monitors the behavior of the dynamic system via observable events.
Essentially, opacity is a confidential property capturing whether or not the system can always deny of the possibility of executing of a secret behavior even when it may be true, i.e., it holds the plausible deniability for secret behaviors. Therefore, a system is said to be opaque with respect to a set of secret states if the intruder can never know for sure that the system is visiting a secret state.

Due to the increasing demands for security certification in safety-critical systems, the notion of opacity has drawn considerable attention in the past years in the literature;
see, e.g.,  \cite{mazare2004using,badouel2007concurrent,bryans2008opacity}.
In particular, in the context of DES, different notions of opacity have been studied, including, e.g., current-state opacity \cite{lin2011opacity}, initial-state opacity \cite{saboori2013verification}, $K$-step and infinite-step opacity \cite{yin2017new}.
The verification of opacity has also been studied for different DES models including Petri nets \cite{tong2017verification,lefebvre2019exposure,saadaoui2020current}, stochastic DES \cite{keroglou2016probabilistic,chen2017quantification,wu2018privacy,yin2019infinite}, real-time  systems \cite{zhan2018} and networked DES \cite{yao2020initial,lin2020information}.
More recently, the notion of opacity has been extended to linear/nonlinear systems with infinite-states and continuous dynamics \cite{ramasubramanian2020notions,an2020opacity,yin2020approximate}.
The reader is referred to the comprehensive  surveys \cite{jacob2016overview,lafortune2018history} and the  textbook \cite{hadjicostis2020estimation} for recent advances on this active research area.

Given an open-loop system that is verified to be non-opaque, one important problem is to \emph{enforce} opacity via some enforcement mechanisms.
This is also referred to as the \emph{synthesis problem}, which  is a very active research topic in the literature and many different enforcement mechanisms have been proposed.
For example, \cite{cassez2012synthesis,zhang2015max,behinaein2019optimal}  consider the enforcement of opacity via dynamic masks that change the output information dynamically.
The idea of  changing the output information has also been leveraged by  using insertion functions \cite{ji2018enforcement,wu2018synthesis,keroglou2018insertion,liu2020k} and event shuffles \cite{barcelos2018enforcing}.
In addition, event delays is also used to enforce opacity in \cite{falcone2015enforcement}.

One of the most widely investigated opacity enforcement mechanism is via the \emph{supervisory control theory} \cite{saboori2011opacity,yin2016uniform,he2021performance}.
In this framework, a supervisor is used to restrict the behavior of the system such that the closed-loop system is opaque   \cite{saboori2011opacity,darondeau2014enforcing,zinck2020enforcing}.
For example, in \cite{takai2008formula} a formula for controllable and opaque sublanguage is provided.
In \cite{dubreil2010supervisory}, the authors solve the opacity control problem by assuming that all controllable events are observable and the observation of the intruder is included in the observation of the supervisor.
In \cite{yin2016uniform}, a uniform approach is provided to solve the opacity-enforcing control problem without the assumption that controllable events are observable; however, it assumes that the observations of the supervisor and the intruder are equivalent.
Recently in \cite{tong2018current}, the authors provide an algorithm for synthesizing an opacity enforcing supervisor without any assumption on event sets. However, it needs to assume that the control policy is not publicly known, which reduces the problem to the computation for a maximal controllable and observable sublanguage of the supremal opaque sublanguage.

Note that all existing works on opacity-enforcing supervisory control considers deterministic supervisors, which issue
a specific control decision at each instant. However, such a
deterministic decision mechanism may decrease the plausible
deniability of the system. This is because, by knowing the
control policy and by observing the occurrences of observable
events, the intruder can recover the control decision made by
the supervisor and, therefore, obtain a better state-estimate of
the system.

In this paper, we propose to use \emph{non-deterministic supervisors}, for the first time, to enforce opacity.
Unlike a deterministic supervisor that issues a specific control decision at each instant, a non-deterministic supervisor provides \emph{a set of control decisions} at each instant and
the specific control decision applied is chosen randomly via a ``coin toss" manner.
In other words, even if the intruder knows the control policy, it still does not know the specific control decision applied as it is decided randomly on-the-fly.
Compared with the deterministic control mechanism, the non-deterministic control mechanism can significantly enhance the plausible deniability of the system, and, therefore, is more likely to enforce opacity.

The main contribution of this paper is that we provide an algorithmic correct-by-construction procedure for  synthesizing a non-deterministic supervisor that enforces opacity.
This problem is fundamentally more challenging than  the deterministic case as the observation of the supervisor and the intruder are \emph{incomparable}.
Specifically, although  the specific control decision applied is unknown \emph{a priori}, the supervisor will know this online choice after it is chosen. This information, however, is not available to the intruder.
Hence, the supervisor's knowledge is strictly more than that of the intruder.
In the standard opacity-enforcing  control problem, it is sufficient   to know the state-estimate of the system, which is not sufficient in our setting due to the issue of incomparable information.
To address this issue, we propose a new information-state that not only contains the state-estimate from the supervisor's point of view, but also contains the estimate of the supervisor's estimate from the intruder's point of view.
In other words, the control decision should be made not only based on what the supervisor thinks about the plant, but also based on what the intruder thinks about the supervisor.
Based on the proposed new information-state, we provide a sound and complete approach that synthesizes a non-deterministic supervisor enforcing opacity.
In particular, we show that using non-deterministic supervisors is strictly more powerful than using deterministic supervisors, in the sense that,
there may exist a non-deterministic opacity-enforcing supervisor even when  deterministic supervisors cannot enforce opacity.

We note that the notion of non-deterministic supervisors was originally proposed in \cite{inan1994nondeterministic}   to solve the standard supervisory control problem  for safety and non-blockingness under partial observation.
This approach was extended by \cite{kumar2005polynomial}.
Non-deterministic control mechanism has also been used for (bi)similarity enforcing supervisory control  problems with non-deterministic models and specifications \cite{fabian1996non,zhou2006control,takai2019bisimilarity,takai2020synthesis,farhat2020control}.
However, to the best of our knowledge, non-deterministic supervisors  have never been applied to the opacity-enforcement problem.
More importantly, the essence of why we use non-deterministic supervisors here is to enhance the plausible deniability of the system, which is fundamentally different from the essence of the existing works.

The rest of this paper is organized as follows.
In Section~\ref{section2}, we introduce some necessary preliminaries.
In Section \ref{section3}, we first provide a motivating example to illustrate the advantage of non-deterministic supervisors. Then we formally present the non-deterministic control mechanism and  formulate the corresponding opacity enforcement control problem.
In Section \ref{section4}, we propose a new type of information state (IS) that captures both the knowledge of the supervisor and the knowledge of the intruder and analyze the underlying information-flow.
Then we restrict our attention to the class of  information-state-based supervisors and discuss how an IS-based supervisor can be encoded as or be decoded from an IS-mapping.
In Section \ref{section5},  we propose an algorithm to synthesize an IS-based non-deterministic opacity-enforcing supervisor based on the structure of the generalized bipartite transition system.
In Section \ref{section6}, we prove the correctness   of the synthesis procedure proposed in Section \ref{section5} by showing that restricting our attention to IS-based supervisors is without loss of generality.
Finally, we conclude the paper in Section \ref{section7}.
Preliminary and partial versions of some of the results in this paper are presented in \cite{xie2020ifac}.
First, all definitions, notations and theorems in \cite{xie2020ifac} have been reformulated in a more uniform manner.
More importantly, the result in \cite{xie2020ifac} is only sound as it restricts the solution space to a finite space a priori.
In this work, we show that restricting to IS-based supervisor is without loss of generality using new techniques developed based on IS-mappings.
This new result establishes both the soundness and the completeness of the synthesis algorithm, i.e., the non-deterministic synthesis problem is completely solved.

\section{Preliminaries}\label{section2}
\subsection{System Model}
Let $\Sigma$ be a finite set of events.
A string over $\Sigma$ is a finite sequence $s=\sigma_1\cdots\sigma_n,\sigma_i\in \Sigma$. We denote by $\Sigma^*$ the set of all strings over $\Sigma$ including the empty string $\epsilon$.
A language $L\subseteq \Sigma^*$ is a set of strings.
For two languages $L_1$ and $L_2$, their concatenation is $L_1L_2=\{s_1s_2\in \Sigma^*: s_1\in L_1,s_2\in L_2\}$.
The prefix-closure of language $L$ is defined by $\overline{L}=\{v\in \Sigma^*:  \exists u\in \Sigma^* \text{ s.t. } vu\in L\}$.

We assume basic knowledge of DES and use common notations; see, e.g., \cite{Lbook}.
A DES is modeled as a deterministic finite-state automaton
\[
G=(X,\Sigma,\delta,x_0),
\]
where $X$ is the finite set of states, $\Sigma$ is the finite set of events, $\delta:X\times\Sigma\rightarrow X$ is the partial transition function, where $\delta(x,\sigma)=y$ means that there is a transition labeled by event $\sigma$ from state $x$ to $y$,
and $x_0\in X$ is the initial state.
The transition function can also be extended to $\delta:X\times \Sigma^*\to X$ in the usual manner \cite{Lbook}.
For simplicity, we write $\delta(x,s)$ as $\delta(s)$ when $x=x_0$.
The language generated by $G$ is defined by $\mathcal{L}(G):=\{s\in \Sigma^*: \delta(x_0,s)!\}$, where $!$ means ``is defined''.

When the system is partially observed, $\Sigma$ is partitioned into two disjoint sets:
$\Sigma=\Sigma_o\dot{\cup}\Sigma_{uo}$, where $\Sigma_o$ is the set of observable events and $\Sigma_{uo}$ is the set of unobservable events.
The natural projection $P:\Sigma^*\rightarrow \Sigma_o^*$ is defined by
\[
P(\epsilon)=\epsilon  \text{ and }
P(s\sigma)=\left\{
\begin{aligned}
&P(s)\sigma &\text{if }& \sigma\in\Sigma_o \\
&P(s)       &\text{if }& \sigma\in\Sigma_{uo}
\end{aligned}
\right.
\]
The natural projection is also extended to $P:2^{\Sigma^*}\to 2^{\Sigma_o^*}$ by $P(L)=\{ P(s): s\in L \}$.

\subsection{Deterministic Supervisory Control}
In the framework of supervisory control, a supervisor dynamically enables/disables controllable events based on its observation.
Formally, we assume that the events set is further  partitioned as $\Sigma=\Sigma_c\dot{\cup}\Sigma_{uc}$,
where $\Sigma_c$ is the set of controllable events and $\Sigma_{uc}$ is the set of uncontrollable events.
A control decision $\gamma\in2^{\Sigma}$ is a set of events such that $\Sigma_{uc}\subseteq\gamma$, i.e.,  uncontrollable events can never be disabled.
We define $\Gamma=\{\gamma\in2^{\Sigma}:\Sigma_{uc}\subseteq\gamma\}$ as the set of  control decisions or control patterns.
A \emph{deterministic supervisor} is a function $S:P(\mathcal{L}(G))\rightarrow\Gamma$.
The language generated by the controlled system, denoted by $\mathcal{L}(S/G)$, is defined recursively by
\begin{itemize}
	\item
	$\epsilon \in \mathcal{L}(S/G)$; and
	\item
	For any $s\in \Sigma^*, \sigma\in \Sigma$, we have $s\sigma\in \mathcal{L}(S/G)$ iff
	$s\sigma\in\mathcal{L}(G), s\in \mathcal{L}(S/G)$ and $\sigma\in S(P(s))$.
\end{itemize}

\subsection{Opacity}
We assume that system $G$ has a ``secret", which is modeled as a set of secret states $X_S\subseteq X$.
Furthermore, we consider a passive \emph{intruder} having the following capabilities:
\begin{itemize}
	\item[A1]
	The intruder knows the system model;
	\item[A2]
	The intruder can observe the occurrences of observable events.
\end{itemize}
Such an intruder is essentially an outside observer or an ``eavesdropper". We say that  system $G$ is \emph{opaque} w.r.t.\ $X_S$ and $\Sigma_o$ if
\[
(\forall s\!\in\! \mathcal{L}(G): \delta(s)\!\in\! X_S)(\exists t\!\in\! \mathcal{L}(G): \delta(t)\!\notin\! X_S)[P(s)=P(t)].
\]
That is, the intruder cannot infer for sure that the system is in a secret state based on the information flow.

When the original system is not opaque, one approach is to design a supervisor $S$ such that the closed-loop system $S/G$ is opaque; this is referred to as the
\emph{opacity-enforcing control problem}.
In this setting, however, the implementation of such a supervisor may become a public information.
To capture this severe scenario, we assume:
\begin{itemize}
	\item[A3]
	The intruder knows the functionality of the supervisor, i.e., the control policy.
\end{itemize}
Note that, under the setting of deterministic supervisors, this knowledge together with the assumption that the intruder and the observer both observe $\Sigma_o$ imply that
the intruder knows precisely the control decision applied at each instant.
Therefore, to define opacity of the controlled system, we should only consider strings in $\mathcal{L}(S/G)$ rather than all strings in $\mathcal{L}(G)$.
Formally, we say that a deterministic supervisor $S:P(\mathcal{L}(G))\rightarrow\Gamma$ enforces opacity on $G$, or  the closed-loop system $S/G$ is  opaque, if
for any string $s \in  \mathcal{L}(S/G)$ such that $\delta(s) \in  X_S$,
there exists a  string $t \in  \mathcal{L}(S/G)$ such that $\delta(t) \notin  X_S$ and $P(s)=P(t)$.

Finally, we introduce some operators that will be used in this paper.
Given a set of states $m\in 2^X$, we denote by $U\!R_{\gamma}(m)$ the \emph{unobservable reach} of $m$ under control decision $\gamma \in \Gamma$, i.e.,
\begin{equation}\label{UR}
U\!R_{\gamma}(m)=\{\delta(x,w)\in X: x\in m, w\in(\Sigma_{uo}\cap\gamma)^*\}.
\end{equation}
We also denote by $N\!X_{\sigma}(m)$ the \emph{observable reach} of $m$ upon the occurrence of an observable event $\sigma\in \Sigma_o$, i.e.,
\begin{equation}\label{NX}
N\!X_\sigma(m)=\{\delta(x,\sigma)\in X:  x\in m\}.
\end{equation}

\section{Enforcing Opacity using Non-deterministic Supervisors}\label{section3}
In this section, we propose to use non-deterministic supervisors to enforce opacity.
First, we illustrate the advantage of using non-deterministic supervisors by a motivating example.
Then we formally define the functionality of the non-deterministic supervisor and opacity of non-deterministic control systems.
We formulate the corresponding opacity-enforcing supervisory control problem that we want to solve in this paper.

\subsection{Motivating Example}

\begin{figure}[!t]
	\centerline{\includegraphics[width=6.5cm]{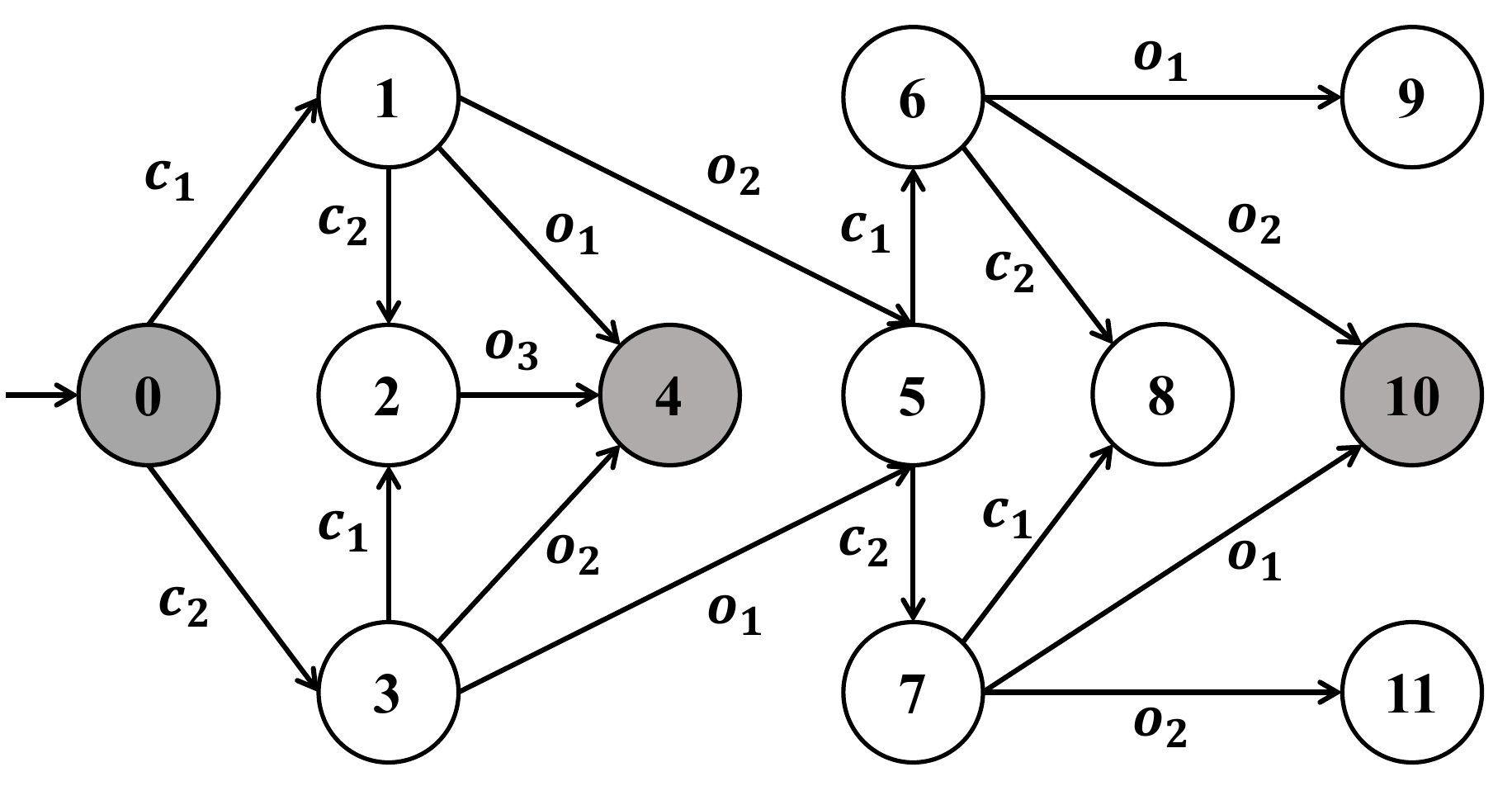}}
	\caption{System $G$ with $\Sigma_c=\{c_1,c_2\}$, $\Sigma_o=\{o_1,o_2,o_3\}$ and $X_S=\{0,4,10\}$.}\label{system}
\end{figure}

\begin{myexm}\label{exm1}\upshape
Let us consider system $G$ shown in Fig.~\ref{system}  with $\Sigma_o=\Sigma_{uc}=\{o_1,o_2,o_3\}$ and $X_S=\{0,4,10\}$.
String  $c_2o_1c_2o_1$ leads to secret state $10$ and its observation is $P(c_2o_1c_2o_1)=o_1o_1$.
By observing $o_1o_1$, the intruder cannot infer for sure that the system is in state $10$ since $P(c_2o_1c_1o_1)=o_1o_1$ and $\delta(c_2o_1c_1o_1)=9\not\in X_S$.	
However, by observing $o_3$, the intruder   knows for sure that the system is at secret state $4$ since for any string $s$ such that $P(s)=o_3$, we have $\delta(s)=4\in X_S$.
Therefore, the system is not opaque and  we need to synthesize a supervisor to protect the system from  revealing secret $4$.
	
For this system, however, we cannot even synthesize a deterministic supervisor to enforce opacity.
To see this clearly, let us evaluate what the supervisor can do initially.
	We have $\Gamma=\{ \emptyset,  \{c_1\},  \{c_2\}, \{c_1,c_2\}   \}$.\footnote{For the sake of simplicity, uncontrollable events are omitted in each control decision, i.e., $\emptyset$ standards for $\{o_1,o_2,o_3\}$ in this example.}
	Clearly, the supervisor cannot choose $\emptyset$ as the initial control decision; otherwise secret state $0$ will be the only reachable state.
	Also, the supervisor cannot make $\{c_1\}$ initially.
	This is because, under this control decision and by observing event $o_1$, the intruder knows for sure that the system is at state $4$ which is reached via $0\xrightarrow{c_1}1\xrightarrow{o_1}4$.
	Note that transitions $0\xrightarrow{c_2}3\xrightarrow{o_1}5$ cannot provide the plausible  deniability since $c_2$ is disabled initially.
	For the same reason, making $ \{c_2\}$ initially will also reveal the secret.
	Finally, decision $\{c_1,c_2\}$ is also problematic initially as it makes state $2$ reachable from which transition $2\xrightarrow{o_3}4$ will also reveal the secret.
	Therefore, we cannot enforce opacity for this system using a deterministic supervisor.
	
	However, one can enforce opacity using the following control mechanism.
	Initially, the supervisor randomly chooses to either  enable $c_1$ or enable $c_2$, but not enable both simultaneously.
	In other words, the control policy initially is a set $\{ \{c_1\},\{c_2\}  \}$ and the specific choice is made randomly on-the-fly.
	Therefore, upon the occurrence of $o_1$ or $o_2$, the intruder does not know  whether this event is from state $1$ or from state $3$
	since it does not know whether or not the initial online control decision is $\{c_1\}$ or $\{c_2\}$ by just knowing the control policy $\{ \{c_1\},\{c_2\}  \}$.
	On the other hand, since $c_1$ and $c_2$ will not be enabled simultaneously,  state $2$ is not reachable; hence, event $o_3$, which reveals the secret, will also not occur.
	Then after observing $o_1$ or $o_2$, the supervisor can make decision $\{c_1,c_2\}$ deterministically, which prevents the system from revealing secret state $10$. 	
\end{myexm}

The above example shows that using a non-deterministic control mechanism is  more powerful than the deterministic one for the purpose of enforcing opacity.
This result is intuitive as opacity is essentially a confidential property.
Using a non-deterministic decision framework will, on the one hand, enhance the plausible deniability of the secret behavior of the system,
and, on the other hand, decrease the confidentiality of the intruder's knowledge about the system.
Hence, the system is more likely to be opaque under the non-deterministic   mechanism.

\subsection{Non-deterministic Supervisor}
Now, we formally define the non-deterministic supervisor and the corresponding opacity enforcement problem.

Compared with a deterministic supervisor that issues a specific control decision at each instant, a non-deterministic supervisor works as follows.
At each instant, the non-deterministic supervisor provides   \emph{a set of possible control decisions}.
Then it non-deterministically  picks a specific control decision from this set  in a ``coin-toss" manner and keeps this specific control decision until a new observable event occurs.
In other words, the control policy only determines a set of allowed decisions,
but the specific control decision chosen is unknown \emph{a priori}, which is a \emph{random realization} under the control policy.
Therefore, the supervisor makes decision not only based on observable events, but also depends on the specific control decisions chosen along the trajectory.

To define the ``history" of the supervisor, we introduce the notion of the \emph{extended string} which is an alternating sequence of control decisions and events either ending up with a control decision in the form of
\begin{equation}\label{eq:rho-1}
\rho=\gamma_0  \sigma_1\gamma_1 \cdots \sigma_n\gamma_n\in \Gamma(\Sigma\Gamma)^*
\end{equation}
or ending up with an event in the form of
\begin{equation}\label{eq:rho-2}
\rho=\gamma_0  \sigma_1\gamma_1 \cdots \sigma_n \in (\Gamma\Sigma)^*
\end{equation}
Then the set of all extended strings is
$\Gamma(\Sigma\Gamma)^*\cup  (\Gamma\Sigma)^*=(\Gamma\Sigma)^*(\Gamma\cup \{\epsilon\}) $.
For any extended string
$\rho \in (\Gamma\Sigma)^*(\Gamma\cup \{\epsilon\})$,
we denote by $\rho|_{\Sigma}$ the projection to $\Sigma^*$, i.e., $\rho|_{\Sigma}=\sigma_1\dots\sigma_n$.

Since some events are unobservable for the supervisor and the supervisor cannot update its decision upon the occurrence of an unobservable event,
similar to the natural projection, we define a new projection mapping
\begin{equation}
\mathcal{O}:
(\Gamma\Sigma)^*(\Gamma\cup \{\epsilon\})
\rightarrow
(\Gamma\Sigma_o)^*(\Gamma\cup \{\epsilon\})
\end{equation}
such that, for any extended string, projection  $\mathcal{O}$ erases each unobservable event together with its successor control decision (if there exists one).
Formally, for extended string $\rho$ in the form of Equation~\eqref{eq:rho-1},
let $1\leq i_1< i_2<\cdots< i_k \leq n$ be all indices such that $\sigma_{i_j}\in \Sigma_o$.
Then  we have
\begin{equation}
\mathcal{O}(\rho)=
\gamma_0(\sigma_{i_1}\gamma_{i_1})(\sigma_{i_2}\gamma_{i_2})\cdots(\sigma_{i_k}\gamma_{i_k})
\end{equation}
and for extended string $\rho$ in the form of Equation~\eqref{eq:rho-2}, we have
\begin{equation}
\mathcal{O}(\rho)=
\gamma_0(\sigma_{i_1}\gamma_{i_1})(\sigma_{i_2}\gamma_{i_2})\cdots(\sigma_{i_{k-1}}\gamma_{i_{k-1}})\sigma_{i_k}.
\end{equation}

From the supervisor's point of view, each decision is made immediately (by first providing a set of decisions and then picking one from the set) after observing an observable event. Therefore, the supervisor should make decision based on alternating sequences that end  up with observable events.
Hence, the non-deterministic supervisor is defined as a function
\begin{equation}\label{eq:nb-def}
S_N: (\Gamma \Sigma_o)^* \to 2^{\Gamma}
\end{equation}
that maps an observable extended string $\mathcal{O}(\rho)=\gamma_0  \sigma_1\gamma_1 \cdots \sigma_n  \in(\Gamma\Sigma_o)^*$, which is referred to a \emph{decision history}, to a set of possible control decisions.
This definition essentially says that, although the control policy is non-deterministic,
the supervisor knows the \emph{realization}, i.e., which specific decision was picked at each previous instant. This is a reasonable setting as the supervisor always knows what it actually picks.
Now, we define the language generated by a non-deterministic supervisor.

\begin{mydef}\label{extended string}\upshape
	Let $S_N$ be a non-deterministic supervisor.
	The set of extended strings generated by the closed-loop system,  denoted by $\mathcal{L}_e(S_N/G)$, is defined recursively by:
	\begin{itemize}
		\item
		$\epsilon\in\mathcal{L}_e(S_N/G)$;
		\item
		$\gamma_0\in\mathcal{L}_e(S_N/G)$ if $\gamma_0\in S_N(\epsilon)$;
		\item
		For any $\rho =\gamma_0\sigma_1\gamma_1\cdots\sigma_n\gamma_n \sigma_{n+1} \in  (\Gamma\Sigma)^*$,
		we have $\rho\in \mathcal{L}_e(S_N/G)$, if and only if
		\begin{itemize}
			\item
			$ \gamma_0\sigma_1\gamma_1 \dots \sigma_n \gamma_n    \in \mathcal{L}_e(S_N/G)$; and	
			\item
			$\sigma_1\cdots\sigma_n\sigma_{n+1}\in \mathcal{L}(G)$; 	
			\item
			$\sigma_{n+1}\in \gamma_n$.
		\end{itemize}
		\item
	For any $\rho =\gamma_0\sigma_1\gamma_1\cdots\sigma_n\gamma_n \sigma_{n+1}\gamma_{n+1} \in  \Gamma(\Sigma\Gamma)^*$,
	we have $\rho\in \mathcal{L}_e(S_N/G)$, if and only if
	\begin{itemize}
		\item
		$ \gamma_0\sigma_1\gamma_1\cdots\sigma_n\gamma_n \sigma_{n+1}   \in \mathcal{L}_e(S_N/G)$; and	
		\item
		$
		\gamma_{n+1} \!\!\in\!\! \left\{\!
		\begin{aligned}
		&\!\{ \gamma_n  \} &\!\!\!\!\text{if }\!&  \sigma_{n+1} \!\in\!\Sigma_{uo} \\
		& \!S_N(\!\mathcal{O}(\gamma_0\sigma_1\gamma_1 \dots \sigma_n \gamma_n   \sigma_{n+1}\! )\!)
		&\!\!\!\!\text{if }\!&  \sigma_{n+1} \!\in\!\Sigma_{o}
		\end{aligned}
		\right.
		$
		\end{itemize}
	\end{itemize}
Then
a string $s\in \Sigma^*$ is said to be \emph{generated} by $S_N/G$ if there exists an extended string $\rho \in \mathcal{L}_e(S_N/G)$ such that $\rho|_\Sigma=s$.
We define
$\mathcal{L}(S_N/G)=\{\rho|_{\Sigma} \in \Sigma^*: \rho\in \mathcal{L}_e(S_N/G)  \}$ as the language generated by the closed-loop system.
\end{mydef}

The intuition of the above definition is as follows.
Initially, the first control decision should be included in the initial set of control decisions provided by $S_N$.
When extended string $ \gamma_0\sigma_1\gamma_1 \dots \sigma_n \gamma_n$ is executed, the next event $\sigma_{n+1}$ should be both feasible in the plant and enabled by the control decision applied currently, i.e.,  $\gamma_n$.
Furthermore, if $\sigma_{n+1}$ is unobservable, then the supervisor should not change the control decision, i.e., $\gamma_{n+1}=\gamma_n$.
On the other hand, if $\sigma_{n+1}$ is observable, then the supervisor may choose a specific control decision from
the new set of all possible control decisions provided by $S_N$, i.e.,
$\gamma_{n+1}\in S_N( \mathcal{O}(\gamma_0\sigma_1\gamma_1 \dots \sigma_n \gamma_n\sigma_{n+1})   )$.
We denote by $\mathcal{L}_e^o(S_N/G)$ the set of extended strings that end up with observable events including the empty string, i.e.,
\[
\mathcal{L}_e^o(S_N/G)=\mathcal{L}_e(S_N/G)\cap (\{\epsilon\}\cup (\Gamma\Sigma)^*(\Gamma\Sigma_o) ).
\]
We also denote by   $ \mathcal{L}_e^d(S_N/G )$ the set of extended strings that end up with control decisions, i.e.,
\[
\mathcal{L}_e^d(S_N/G)=\mathcal{L}_e(S_N/G)\cap \Gamma(\Sigma\Gamma)^*.
\]
The supervisor always issues a decision (first generates a set of control decisions and then randomly picks one)  when an extended string $\rho$ in $\mathcal{L}_e^o(S_N/G)$ is generated. Then
for any observable extended string $\rho\in  \mathcal{O}(\mathcal{L}_e^o(S_N/G))$, we define
\begin{align}
\hat{\mathcal{E}}_S(\rho)
\!=\! \left\{
\delta( \rho'|_{\Sigma}  )  \!\in\! X  :
\exists   \rho'\in \mathcal{L}_{e}^o (S_N/G)
\text{ s.t. }\mathcal{O}(\rho')=\rho
\right\}
\end{align}
 as the set of all possible states that can be reached immediately after observing the last event from the supervisor's point of view,
i.e., the state estimate of the supervisor without the unobservable tail.

Once the supervisor issues the last control decision, the set of states that can be reached unobservably can be determined.
Formally,  for any  extended string $\rho\in  \mathcal{O}(\mathcal{L}_e^d(S_N/G))$, we define
\begin{align}
\mathcal{E}_S(\rho)
 =  \{  \delta( \rho'|_{\Sigma}  ) \!\in\! X   : \exists   \rho'\in \mathcal{L}_e(S_N/G) \text{ s.t. }\mathcal{O}(\rho')=\rho        \}
\end{align}
as the state-estimate of the supervisor with the unobservable tail included.
These two state estimates can be computed recursively as follows \cite{yin2016uniform}:
\begin{itemize}
	\item
	$\hat{\mathcal{E}}_S(\epsilon)=\{x_0\}$;
	\item
	$\mathcal{E}_S(\rho')= U\!R_{\gamma}( \hat{\mathcal{E}}_S(\rho)) $
	for $\rho'=\rho\gamma\in\mathcal{O}(\mathcal{L}_e^d(S_N/G))$;
	\item
	$\hat{\mathcal{E}}_S(\rho'')=  N\!X_{\sigma}( \mathcal{E}_S(\rho')  )$
	for $\rho''=\rho'\sigma\in  \mathcal{O}(\mathcal{L}_e^o(S_N/G))$.
\end{itemize}
Here we use subscript ``$S$" to emphasize that these state-estimates are from the supervisor's point of view.

\begin{myexm}\upshape
Still consider system $G$ in Fig.~\ref{system} with $\Sigma_c=\{c_1,c_2\}$ and $\Sigma_o=\{o_1,o_2,o_3\}$.
Suppose that the initial non-deterministic decision set is
$S_N(\epsilon)=\{ \gamma_1,\gamma_2,\gamma_3,\gamma_4   \}$, where $\gamma_1=\emptyset, \gamma_2=\{c_1\}, \gamma_3=\{c_2\}$ and $\gamma_4=\{c_1,c_2\}$.
Then we have $\gamma_1,\gamma_2,\gamma_3,\gamma_4\in \mathcal{L}_e(S_N/G)$.
Suppose that the supervisor chooses $\gamma_2$ initially.
Then we have  $\gamma_2 c_1 \in  \mathcal{L}_e(S_N/G)$ and since $c_1$ is unobservable, we have $\gamma_2 c_1\gamma_2\in \mathcal{L}_e(S_N/G)$. When $o_2$ occurs,
$\rho=\gamma_2 c_1 \gamma_2 o_2 \in \mathcal{L}_e^o(S_N/G)$ becomes the first extended string that ends up with an observable event.
Then the information available to the supervisor is  $\mathcal{O}(\rho)=\gamma_2 o_2$.
The state estimate of the supervisor is
$\hat{\mathcal{E}}_S(\gamma_2 o_2)=N\!X_{o_2}(U\!R_{\gamma_2}(\hat{\mathcal{E}}(\epsilon)))=N\!X_{o_2}(\{0,1\}  )=\{5\}$, i.e.,
the supervisor knows for sure that system is at state $5$ by first choosing $\gamma_2$ and then observing $o_2$.

Suppose that the supervisor then issues $\gamma_4$ deterministically, i.e., $S_N(\gamma_2 o_2)=\{\gamma_4\}$ and the supervisor can only choose $\gamma_4$;
this yields extended string $\rho'= \gamma_2 c_1 \gamma_2 o_2\gamma_4 \in \mathcal{L}_e^d(S_N/G)$
With the last control decision information attached, the information available to the supervisor is
$\mathcal{O}(\rho')=\gamma_2 o_2 \gamma_4$.
Then the state estimate of the supervisor is $\mathcal{E}_S(\gamma_2 o_2 \gamma_4)=U\!R_{\gamma_4}(\hat{\mathcal{E}}_S(\gamma_2 o_2))=\{5,6,7,8\}$.

Again,   extended string $\rho''=  \gamma_2 c_1 \gamma_2 o_2\gamma_4c_1 \gamma_4 o_1 \in \mathcal{L}_e^o(S_N/G)$ can be generated with $\mathcal{O}(\rho'')=\gamma_2 o_2 \gamma_4 o_1$. Then the state estimate of the supervisor becomes $\hat{\mathcal{E}}_S(\gamma_2 o_2 \gamma_4 o_1)=N\!X_{o_1}(\mathcal{E}_S(\gamma_2 o_2 \gamma_4))=\{9,10\}$.
\end{myexm}

\begin{remark}\upshape
Finally, we note that some non-deterministic control decision sets may contain redundancy, i.e., for $\{\gamma_1,\dots,\gamma_n\} \in 2^{\Gamma}$,
$\gamma_i\subset \gamma_j$  for some $i,j=1,\dots,n$.
In this case, removing $\gamma_i$ from the  non-deterministic control decision set does not change the behavior of the closed-loop system.
Formally, we say that a non-deterministic control decision set  $\{\gamma_1,\dots,\gamma_n\} \in 2^{\Gamma}$ is \emph{irredundant} if its elements are incomparable, i.e.,  $\forall i,j=1,\dots,n: \gamma_i  \not\subset  \gamma_j$.
For the sake of simplicity and without loss of generality, hereafter, we   only consider irredundant non-deterministic control decision sets.
\end{remark}

\begin{remark}\upshape
Note that our definition of non-deterministic supervisor in Equation~\eqref{eq:nb-def} is language-based, which may require infinite memory to realize.
However, we will show later in the paper that finite-memory supervisors are always sufficient for our purpose. For this case, one may also realize a non-deterministic supervisor by a non-deterministic finite-state automaton and the closed-loop behavior can be then computed by taking the synchronous composition between the plant and the supervisor automaton.
\end{remark}
\subsection{Opacity of Non-deterministic Control Systems}
In the definition of opacity for the standard deterministic setting, the intruder model has been specified by A1-A3.
Here, we still consider the same intruder model, but we explain A3 more clearly in the non-deterministic setting:
\begin{itemize}
	\item[A3$'$]
	The intruder knows the functionality of the supervisor.
	That is, the intruder knows   the set of all possible control decisions the supervisor may pick according to the control policy.
	However, it does not know which specific control decision the supervisor picks online.
\end{itemize}
This assumption is reasonable in many applications as long as the communication channel between supervisor and the actuator is reliable.
Then under this setting, when the supervisor observes $\rho\in \mathcal{O}(\mathcal{L}_e(S_N/G))$, the intruder can only observes
$\rho|_{\Sigma} \in P(\mathcal{L}(S_N/G))$.
Therefore, the state estimate of the intruder essentially is more uncertain, which needs to estimate all possible  realizations consistent with the  control policy and the observation.
Formally, for any observable  string $s \in P(\mathcal{L}(S_N/G))$, we define ${X}_{I}(s)$ as the state estimate of the intruder, i.e.,
\begin{equation}
{X}_{I}(s)=\{  \delta( s'  ) \!\in\! X   : \exists   s'\in \mathcal{L}(S_N/G) \text{ s.t. }   P(s')=s        \}.
\end{equation}
Then opacity of control systems under non-deterministic supervisors is defined as follows.
\begin{mydef}\label{Definition2}\upshape
	Let $S_N: (\Gamma \Sigma_o)^* \to 2^{\Gamma}$ be a non-deterministic supervisor.
	We say the closed-loop system $S_N/G$ is  opaque (w.r.t.\ $\Sigma_o$ and $X_S$) if
	$\forall  s \in P(\mathcal{L}(S_N/G))\text:{X}_{I}(s)\not\subseteq X_S$.
\end{mydef}

The state estimate of the supervisor and the state estimate of the intruder can be related as follows.
Since the intruder observes strictly less than the supervisor, its estimate of the system is essentially the union of its estimate of all possible supervisor's knowledge about the system.
To see this more clearly, for any observable  string $s \in P(\mathcal{L}(S_N/G))$, we also define
\begin{align}\label{Es}
\hat{\mathcal{E}}_I(s)
&\!=\!\{ \hat{\mathcal{E}}_S ( \rho )\in 2^X \!:\!   \rho\in \mathcal{O}(\mathcal{L}_e^o(S_N/G))\! \text{ s.t. }  \! \rho|_{\Sigma}=s   \} \\
\mathcal{E}_I(s)
&\!=\!\{ \mathcal{E}_S( \rho )\in 2^X \!:\! \rho\in \mathcal{O}(\mathcal{L}_e^d(S_N/G)) \!\text{ s.t. }\! \rho|_{\Sigma}=s   \}
\end{align}
as the intruder's estimates of the state-estimations of the supervisor.
Note that $\hat{\mathcal{E}}_I(s)$ and $\mathcal{E}_I(s)$ are respectively the state estimate immediately after observing an observable event and
the state-estimate with the unobservable tail included. Note that we use subscript ``$I$" to emphasize that these estimates are from the intruder's point of view.
Then we have the following result that connects $\mathcal{E}_I$ and $X_I$.

\begin{mypro}\label{pro1}
	For any  $s \in P(\mathcal{L}(S_N/G))$,   we have
	\[
	{X}_{I}(s)= \bigcup \mathcal{E}_I(s).
	\]
\end{mypro}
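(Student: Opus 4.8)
The plan is to prove the set equality $X_I(s) = \bigcup \mathcal{E}_I(s)$ by establishing containment in both directions, working directly from the definitions of the three objects involved: $X_I(s)$, which quantifies over strings $s' \in \mathcal{L}(S_N/G)$ with $P(s') = s$; the estimate $\mathcal{E}_S(\rho)$, which quantifies over extended strings $\rho' \in \mathcal{L}_e(S_N/G)$ with $\mathcal{O}(\rho') = \rho$; and $\mathcal{E}_I(s)$, which collects $\mathcal{E}_S(\rho)$ over all $\rho \in \mathcal{O}(\mathcal{L}_e^d(S_N/G))$ with $\rho|_\Sigma = s$. The conceptual content is that the intruder sees only the $\Sigma$-projection $s$, whereas the supervisor distinguishes the different decision-histories $\rho$ that project to $s$; so the intruder's estimate should be the union, over all such supervisor-histories, of what the supervisor knows.

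First I would unfold the left-hand side. A state $x \in X_I(s)$ means there exists $s' \in \mathcal{L}(S_N/G)$ with $P(s') = s$ and $\delta(s') = x$. By the definition of $\mathcal{L}(S_N/G)$, such an $s'$ arises as $\rho'|_\Sigma$ for some extended string $\rho' \in \mathcal{L}_e(S_N/G)$. The key observation linking the two projections is that $P$ and $\mathcal{O}$ are compatible in the sense that $P(\rho'|_\Sigma) = (\mathcal{O}(\rho'))|_\Sigma$, since both operations retain exactly the observable events. I would isolate this identity as the central bridging lemma. Then $P(s') = s$ translates to $(\mathcal{O}(\rho'))|_\Sigma = s$, so setting $\rho := \mathcal{O}(\rho')$ gives an observable extended string with $\rho|_\Sigma = s$ and $x = \delta(\rho'|_\Sigma) \in \mathcal{E}_S(\rho)$ by definition, hence $x \in \bigcup \mathcal{E}_I(s)$. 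One technical point to handle: $\mathcal{E}_I(s)$ ranges only over $\rho \in \mathcal{O}(\mathcal{L}_e^d(S_N/G))$, i.e.\ observable extended strings ending in a \emph{control decision}, so I must argue that every $\rho' \in \mathcal{L}_e(S_N/G)$ can, without loss of generality, be taken to end in a control decision (equivalently, every generated extended string has an extension by its current decision), so that $\mathcal{O}(\rho') \in \mathcal{O}(\mathcal{L}_e^d(S_N/G))$. For the reverse inclusion, given $x \in \bigcup \mathcal{E}_I(s)$ there is some $\rho \in \mathcal{O}(\mathcal{L}_e^d(S_N/G))$ with $\rho|_\Sigma = s$ and $x \in \mathcal{E}_S(\rho)$, which by definition yields $\rho' \in \mathcal{L}_e(S_N/G)$ with $\mathcal{O}(\rho') = \rho$ and $\delta(\rho'|_\Sigma) = x$; applying the same bridging identity gives $P(\rho'|_\Sigma) = \rho|_\Sigma = s$, so $x \in X_I(s)$.

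The step I expect to be the main obstacle is the careful verification of the bridging identity $P(\rho|_\Sigma) = (\mathcal{O}(\rho))|_\Sigma$ together with the bookkeeping about unobservable tails. The subtlety is that $\mathcal{O}$ erases each unobservable event \emph{together with its successor control decision}, while $\mathcal{E}_S$ versus $\hat{\mathcal{E}}_S$ differ precisely by whether the unobservable reach after the last decision is included; I must ensure I am using the version ($\mathcal{E}_S$, ending in a decision) whose unobservable-tail convention matches the states $\delta(\rho'|_\Sigma)$ reachable by full strings $s'$ projecting to $s$. Concretely, a full string $s'$ with $P(s') = s$ may end in an unobservable suffix, and the corresponding extended string ends in a control decision whose unobservable reach is exactly captured by $\mathcal{E}_S$ (not $\hat{\mathcal{E}}_S$); this is why the union is taken over $\mathcal{E}_I$ rather than $\hat{\mathcal{E}}_I$. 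I would therefore state the projection-compatibility claim precisely, prove it by a short induction on the structure of the extended string following the recursive clauses of Definition~\ref{extended string}, and then let both inclusions fall out as routine consequences.
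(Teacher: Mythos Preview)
Your proposal is correct and follows essentially the same approach as the paper. The paper's proof is a terse chain of equalities
\[
\bigcup\mathcal{E}_I(s)\;=\;\{\delta(\rho'|_\Sigma):\rho'\in\mathcal{L}_e(S_N/G),\ \mathcal{O}(\rho')|_\Sigma=s\}\;=\;\{\delta(s'):s'\in\mathcal{L}(S_N/G),\ P(s')=s\}\;=\;X_I(s),
\]
which implicitly uses exactly your bridging identity $P(\rho'|_\Sigma)=(\mathcal{O}(\rho'))|_\Sigma$ and the correspondence between $\mathcal{L}(S_N/G)$ and $\mathcal{L}_e(S_N/G)$; you simply make these steps (and the $\mathcal{L}_e^d$ versus $\mathcal{L}_e$ bookkeeping) explicit and argue the two inclusions separately rather than via a single equality chain.
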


\begin{proof}
By the definitions of $\mathcal{E}_I(s),\mathcal{E}_S(\rho)$, $\mathcal{L}(S_N/G)$ and mapping $\mathcal{O}$, we have
\begin{align}
 & \bigcup\mathcal{E}_I(s)\nonumber \\
=& \bigcup\{ \mathcal{E}_S( \rho )\in 2^X :   \rho\in \mathcal{O}(\mathcal{L}_e^d(S_N/G)) \text{ s.t. }   \rho|_{\Sigma}=s   \}\nonumber \\
=&\{x\in\mathcal{E}_S(\rho):\rho\in\mathcal{O}(\mathcal{L}_e^d(S_N/G))\text{ s.t. }\rho|_\Sigma=s\}\nonumber\\
=&\{
\delta(\rho'|_\Sigma):
\rho'\in\mathcal{L}_e(S_N/G)  \text{ s.t. } \mathcal{O}(\rho')|_\Sigma=s
\}\nonumber\\
=&\{\delta(s'):  s'\in\mathcal{L}(S_N/G) \text{ s.t. }P(s')=s\}\nonumber\\
=&X_I(s)\nonumber
\end{align}
This completes the proof.
\end{proof}

Given a non-opaque system, our goal is to synthesize  a non-deterministic supervisor that restricts the system behavior such that opacity is satisfied for the closed-loop system.
The opacity enforcement synthesis problem is formulated as follows.
\begin{myprob}\label{prob1}
(Opacity Enforcement Problem)
Given system $G$ and secret states $X_S\subseteq X$, synthesize a partial observation non-deterministic supervisor $S_N: (\Gamma\Sigma_o)^*\rightarrow 2^\Gamma$, such that $S_N/G$ is opaque w.r.t.\ $X_S$ and $\Sigma_o$.
\end{myprob}

\begin{remark}\upshape
Compared with deterministic supervisors, the additional power of non-deterministic supervisor, in terms of opacity enforcement, relies on assumption A3$'$. That is, the intruder is aware of the functionality of the non-deterministic supervisor but cannot eavesdrop the specific control decisions issued by the supervisor.
Note that, if the intruder is completely not aware of the functionality of the supervisor (no matter deterministic or non-deterministic), then it has to make state estimation  based on the \emph{open-loop} system $G$.
For this case, using non-deterministic supervisors does not provide any additional power compared with deterministic supervisors, and  it suffices to solve the  \emph{supervisor-unaware} deterministic opacity-enforcement problem; see, e.g., \cite{takai2008formula,tong2018current}.
If the intruder is aware of the functionality of the non-deterministic supervisor, but at the same time,  is also capable of eavesdropping the control decisions issued by the non-deterministic supervisor,
then this essentially means that that  the non-deterministic control information can be resolved by the intruder.
For this case, using non-deterministic supervisors is still the same as using deterministic supervisors in terms of the capability of enforcing opacity.
Then it suffices to solve a \emph{supervisor-aware} deterministic opacity-enforcement problem; see, e.g., \cite{dubreil2010supervisory,saboori2011opacity,yin2016uniform}.
\end{remark}

\section{Information State and its Flow}\label{section4}
In the formulation of the opacity enforcement problem, the domain of the supervisor is defined over languages.
Therefore, the solution space is infinite in general and there is no prior knowledge to bound the memory of the supervisor.
To effectively solve the synthesis  problem, in this section, we  restrict our attention to a class of \emph{information-state-based supervisors}, where the space of  information states is finite.
We first define the information state in the non-deterministic control problem and then discuss how  the selected information state evolves.
Also, we define a IS-mapping that can encode an IS-based supervisor.
Our method for synthesizing a non-deterministic supervisor is to first synthesize a IS-mapping and then encode a supervisor from it.
To this end, we finally put forward an algorithm that decodes a non-deterministic supervisor from IS-mapping.
We will show later in Section \ref{section6} that restricting our attention to IS-based  supervisors is without loss of generality for the solvability of the  general  non-deterministic supervisor opacity enforcement problem.

\subsection{Proposed Information Structure}\label{section4.1}
In the deterministic control problem, it is known that $2^X$ is sufficient to realize an opacity-enforcing supervisor \cite{yin2016uniform}.
That is, a deterministic  supervisor can be encoded as a state-based mapping
$S: 2^X \to \Gamma$,
 which can be decoded by recursively estimating the state of the system and making decision based on the  state-estimate (information-state).

In the non-deterministic control problem, the supervisor and the intruder observe different information.
Hence, the supervisor needs to make decision based on both the state estimates of itself and that of the intruder.
To separate the observation of the supervisor and the intruder, we propose  the following information-state space
\[
I:=2^X\times 2^{2^X}.
\]
Each information state $\imath\in I$ is in the form of $\imath=( m, \mathbf{m})$.
Intuitively,
the first component aims to  represent the state estimate of supervisor,
while the second component aims to  represent intruder's knowledge of the supervisor.

Formally, given a non-deterministic supervisor $S_N$ and let $\rho\in \mathcal{O}(\mathcal{L}_e^o(S_N/G))$ be a decision history observed by the supervisor.
We define
\[
\mathcal{I}_{S_N}(\rho)=( \hat{\mathcal{E}}_S(\rho) , \hat{\mathcal{E}}_I(\rho|_{\Sigma})  ) \in 2^X\times 2^{2^X}
\]
as the information-state reached by $\rho$ under $S_N$.
Clearly, we have $ \hat{\mathcal{E}}_S(\rho) \in \hat{\mathcal{E}}_I(\rho|_{\Sigma}) $ for any $\rho$ by definition.
We also define
\[
\mathcal{I}_{S_N}:=\{ \mathcal{I}_{S_N}(\rho):\rho\in \mathcal{O}(\mathcal{L}_e^o(S_N/G)) \}
\]
the set of all information-states reached by $S_N$.

\begin{mydef}\upshape
	A non-deterministic supervisor $S_N: (\Gamma\Sigma_o)^*\to 2^{\Gamma}$ is said to be \emph{information-state-based} (IS-based) if
	\begin{equation}\label{eq:IS-condition}
	\forall \rho,\rho' \!\in\! \mathcal{O}(\mathcal{L}_e^o(S_N/G)):\mathcal{I}_{S_N}(\rho)\!=\! \mathcal{I}_{S_N}(\rho')\!\Rightarrow\! S_N(\rho)\!=\!S_N(\rho').
	\end{equation}
\end{mydef}

An IS-based supervisor only makes decisions based on its current information-state rather than the entire history.
Therefore, we can encode an IS-based supervisor  as a partial IS-mapping.

\begin{mydef}\upshape
We say a partial IS-mapping $\Theta :I \rightarrow 2^\Gamma$ \emph{encodes}  supervisor $S_N: (\Gamma\Sigma_o)^*\to 2^{\Gamma}$ if
\[
\forall \rho \in  \mathcal{O}(\mathcal{L}_e^o(S_N/G)):   \Theta(    \mathcal{I}_{S_N}(\rho)        ) =   S_N(\rho).
\]
\end{mydef}

Our general approach for synthesizing a non-deterministic supervisor is to synthesize its  IS-mapping encoding.
Clearly, given an IS-based supervisor $S_N$,  we can easily encode it as an IS-mapping  $\Theta:I \rightarrow 2^\Gamma$, which is defined at each state in $\mathcal{I}_{S_N}$.
On the other hand, however, given a partial IS-mapping $\Theta:I  \rightarrow 2^\Gamma$, it is not straightforward how to \emph{decode} an IS-based supervisor from it. In fact, not every partial IS-mapping $\Theta:I  \rightarrow 2^\Gamma$ actually encodes  an IS-based supervisor.
As a necessary requirement, the partial IS-mapping should be defined at state $\imath_0=( \{x_0\},\{\{x_0\}\}  )$, which is the initial information-state of any IS-based supervisor.
Then one can argue inductively that, for any reachable information-state, the partial IS-mapping should be defined,
which suggests that the domain of the partial IS-mapping should contain the ``reachability closure" from the initial-state $\imath_0$; otherwise, the decoded supervisor   will ``get stuck" at those states  where the IS-mapping is undefined.

To compute such an ``reachability closure", we need to investigate how the information-state evolves.
As we discussed earlier, the first component of the information-state can be computed recursively based on $\rho$.
However, the question is how to compute the second component.
To this end, we should not only know the control decision for history $\rho$, but should also know the control decisions for those $\rho'$ such that $\rho|_\Sigma=\rho'|_\Sigma$.
In the remaining part of this section, we will elaborate on how  $\hat{\mathcal{E}}_I(\rho|_{\Sigma})$ can be computed  recursively and by what information.

\subsection{Micro/Macro States and Decisions}
Before we proceed further, we define some necessary concepts.
First, we introduce the notion of micro-state, which is used to represent the knowledge of supervisor.
\begin{mydef}(Micro-State)\upshape
	A \emph{micro-state} $m \in 2^X$ is a set of states and we define $M=2^X$ as the set of micro-states.
	An \emph{augmented micro-state} $m^+=(m,\gamma) \in  2^X\times \Gamma$ is a micro-state augmented with a control decision and we define $M^+=2^X\times \Gamma$ as the set of augmented micro-states.
\end{mydef}

Then, we define the notion of macro-state, which is used to represent the knowledge of intruder about the supervisor.

\begin{mydef}(Macro-State)\upshape
	A \emph{macro-state}  $\mathbf{m}=\{  m_1 ,  m_2,$ $  \dots, m_n \}\subseteq 2^X$ is a set of micro-states and we define $\mathbb{M}=2^{2^X}$ as the set of macro-states.
	An \emph{augmented macro-state}  $\mathbf{m}^+=\{  (m_1,\gamma_1) ,  (m_2,\gamma_2) ,\dots, $ $(m_n,\gamma_n) \}\subseteq 2^X\times \Gamma$ is a set of augmented micro-states and we define $\mathbb{M}^+=2^{2^X\times \Gamma}$ as the set of augmented macro-states.
\end{mydef}

In order to estimate the knowledge of the intruder, we should not only know the decision of the supervisor at a specific micro-state,
but also should know the decisions at other micro-states   in the same macro-state, which means that these micro-states are indistinguishable from the intruder's point of view.
This leads to the notion of macro-control-decision.

\begin{mydef}(Macro-Control-Decision) \upshape
	A \emph{macro-control-decision} is a set in the form of
	\[
	d=\{(m_1,\Gamma_1),(m_2,\Gamma_2),\dots, (m_n,\Gamma_n)  \}  \subseteq 2^X\times 2^\Gamma,
	\]
	where each $(m_i,\Gamma_i)$  is a pair of micro-state and a non-deterministic control decision (a set of control decisions).
	We denote by $\!D\!=\!2^{2^X\!\times\! 2^\Gamma}\!\!$ the set of  macro-control-decisions.
\end{mydef}

Let  $\mathbf{m}=\{  m_1 ,  m_2, ,\dots, m_n \}\in \mathbb{M}$ be a macro-state and $d \in D$ be a macro-control-decision.
We say that $d$ is \emph{compatible} with $\mathbf{m}$ if it is in the form of
\[
d=\{(m_1,\Gamma_1),(m_2,\Gamma_2),\dots, (m_n,\Gamma_n)  \}  \subseteq 2^X\times 2^\Gamma,
\]
i.e., $d$ essentially assigns each micro-state   $m_i\in \mathbf{m}$ a non-deterministic control decision $\Gamma_i\in 2^{\Gamma}$.

The unobservable reach of a macro-control-decision $d\in D$ is defined by
\[
\odot(d)=\{(m',\gamma):  \exists (m,\Gamma)\in d,\gamma\in\Gamma\text{ s.t. }m'=  {U\!R}_\gamma(m)\}.
\]

Let $\mathbf{m}^+$ be an augmented macro-state and $\sigma\in\Sigma_o$ be an observable event.
Then the observable reach of $\mathbf{m}^+$ upon the occurrence of $\sigma$ is defined as
\[
\widehat{N\!X}_{\sigma}(\mathbf{m}^+ )=\{  m':  \exists (m,\gamma)\!\in\! \mathbf{m}^+ \text{ s.t. } m'\!=\!{N\!X}_\sigma(m) \wedge \sigma\!\in\! \gamma  \}.
\]

\subsection{Information-Flow Analysis}
Now, suppose that an IS-mapping $\Theta: I\to 2^{\Gamma}$ that encodes an IS-based supervisor $S_N$ is given.
Let $\textbf{m}=\{m_1,\dots, m_k\}$ be a macro-state representing the intruder's estimate of the supervisor's knowledge.
We define
\[
d_{\Theta}(\textbf{m})  =  \{ (m_1, \Theta(m_1,\mathbf{m}))  ,\dots,   (m_k, \Theta(m_k,\mathbf{m}))  \}
\]
as the macro-control-decision made by IS-based supervisor  at  macro-state $\textbf{m}$.

Initially, the state-estimate of the supervisor is $m_0=\{x_0\}$ and the intruder believes that this is the unique estimate of the system  with estimate  $\mathbf{m}_0=\{  m_0  \}$, which forms the initial information-state $\imath_0=(m_0,\mathbf{m}_0)$.

Then the supervisor issues a non-deterministic  decision set  $\Gamma_0=S_N(\epsilon)=\Theta( m_0,\textbf{m}_0)$.
Note that, we have pre-specified that the supervisor is IS-based.
Therefore, we denote the control decision information at this instant by a macro-control-decision
$d_{\Theta}(\textbf{m}_0)=\{(m_0, \Theta( m_0,\textbf{m}_0) )  \}$, which means that ``\emph{the supervisor will make control decision if its state-estimate is $m_0$}".
Note that, at this instant,   $d_{\Theta}(\textbf{m}_0)$ is a singleton as the intruder does not yet have ambiguity about   the supervisor, i.e., $\textbf{m}_0=\{ m_0\}$.

Once the allowed decision set $\Gamma_0$ is specified, the supervisor will pick a concrete control decision in it.
The intruder does not know which decision is chosen while the supervisor knows.
Suppose that $\Gamma_0=\{\gamma_0^1,\dots,\gamma_0^k\}$ contains $k$ control decisions.
Then the intruder's knowledge about the supervisor becomes
\begin{align}\label{update1}
\mathbf{m}_0^+
=& \odot( d_{\Theta}(\textbf{m}_0)  ) \\
=&\{ (U\!R_{\gamma_0^1}( m_0), \gamma_0^1), \dots, (U\!R_{\gamma_0^k}( m_0), \gamma_0^k)  \}  \nonumber  \\
=&\{(m_0^1,\gamma_0^1), \dots, (m_0^k,\gamma_0^k)\}, \nonumber
\end{align}
which means that the supervisor's estimate (with the unobservable tail) is possibly $U\!R_{\gamma_0^i}( m_0)$ and the control decision applied is $\gamma_0^i$.
Note that, the supervisor knows precisely which augmented micro-state $(m_0^i,\gamma_0^i)$ it is at.

Then when a new observable event $\sigma\in \Sigma_o$ occurs, and  the intruder updates its knowledge   to
\begin{align}\label{update2}
\mathbf{m}_1
=  \widehat{N\!X}_{\sigma}( \mathbf{m}_0^+   )
= \{m_1^1, \dots, m_1^{p}\}.
\end{align}
which is a macro-state containing at most $k$ micro-states, i.e. $p\leqslant k$.

Now, let us assume that, after some steps, the intruder's knowledge about the supervisor (immediately after the occurrence of an observable event) is
\[
\mathbf{m}_n = \{m_n^1, \dots, m_n^{k}\},
\]
Note that the supervisor knows the exact state estimate, i.e., $m_n^i\in \mathbf{m}_n$,  and for each $m_n^i$,
it allows  non-deterministic decision set $\Gamma_i=\Theta( m_n^i,\mathbf{m}_n )$ as we assume the supervisor is IS-based and is encoded by $\Theta$.
Therefore, the corresponding macro-control-decision is
\begin{align}\label{d}
d_{\Theta}(\textbf{m}_n)\!\!=\!\!\{\!(m_n^1, \Theta( m_n^1,\mathbf{m}_n )  ),\dots,  ( m_n^k, \Theta( m_n^k,\mathbf{m}_n )  ) \!\}.
\end{align}
Then the intruder's knowledge about the supervisor is updated by adding this control information
\[
\mathbf{m}_n^+ = \odot( d_{\Theta}(\textbf{m}_n)  ),
\]
which is an augmented marco-state containing at most $\sum_{i=1}^k |\Theta( m_n^i,\mathbf{m}_n )|$ augmented micro-states.

Based on the above discussion,
suppose that the intruder observes $\sigma_1\cdots\sigma_n\in P(\mathcal{L}(S_N/G))$ and by assuming the fact that $S_N$ is an IS-based supervisor encoded by $\Theta$,
it induces the following sequence
\begin{align}\label{eq:induce}
\textbf{m}_0
\xrightarrow{d_0} \textbf{m}_0^+
\xrightarrow{\sigma_1} \textbf{m}_1
\xrightarrow{d_1}
\dots
\xrightarrow{\sigma_n}
\textbf{m}_n
\xrightarrow{d_n} \textbf{m}_n^+,
\end{align}
where
$\textbf{m}_0=\{ \{x_0\} \}$, $d_i= d_{\Theta}(\textbf{m}_i)$, $\mathbf{m}_i^+ = \odot(d_i)$ and 	
$\mathbf{m}_{i+1} =  \widehat{N\!X}_{\sigma_{i+1}}( \mathbf{m}_i^+   )$.
We note that $\sigma_{i+1}$ is defined at  $\textbf{m}_i^+$ iff
there exist $(m,\gamma)\!\in\! \textbf{m}_i^+$ and $x\in m$ such that $\delta(x,\sigma_{i+1})! $ and  $\sigma_{i+1}\!\in\! \gamma  $.
Therefore, the sequence in Equation~\eqref{eq:induce} is uniquely defined when $\sigma_1\cdots\sigma_n$ and $\Theta$ are fixed;
it is independent from the actual online choice of the supervisor at each instant.

Now we are ready to specify the reachability closure of an IS-mapping.
Formally let $\Theta: I\to 2^\Gamma$ be a partial IS-mapping and $\imath=(m,\mathbf{m})\in I$ be an information-state.
Then the \emph{reachability closure} of $\imath$ under $\Theta$,
denoted by $\textsc{Reach}_{\Theta}( \imath )\subseteq I$, is defined recursively as follows:
\begin{itemize}
	\item
	$\imath\in  \textsc{Reach}_{\Theta}( \imath )$;
	\item
	$\imath'=(m',\mathbf{m}') \in  \textsc{Reach}_{\Theta}( \imath )$  if
	\begin{itemize}
		\item
		$m'\in \mathbf{m}'$; and
		\item
		there exists $\imath''=(m'',\mathbf{m}'')\in   \textsc{Reach}_{\Theta}( \imath )$
		such that $\textbf{m}'' \xrightarrow{  d_{\Theta}( \textbf{m}''  )    } \textbf{m}''^+
		\xrightarrow{\sigma} \textbf{m}'$
		for some $\sigma\in \Sigma_o$.
	\end{itemize}
\end{itemize}

\subsection{Property of the Information-State}
The above analysis of information-flow is heuristic.
In this subsection, we formally show that the proposed information updating rule indeed yields the state estimate of the intruder in the controlled system.
\begin{mythm}\label{thm1}
	Let $\Theta$ be an IS-mapping that encodes an IS-based  supervisor $S_N$
	and $s=\sigma_1\dots\sigma_k \in P(\mathcal{L}(S_N/G))$ be an observable string available to the intruder.
	Let $\textbf{m}_k$ and $\textbf{m}_k^+$ be states induced by $s$ and $\Theta$ according to Equation~\eqref{eq:induce}.
	Then we have
	\begin{itemize}
		\item[(i)]
		$\textbf{m}_k=\hat{\mathcal{E}}_I(s)$;
		and
		\item[(ii)]
		\begin{equation}
		\textbf{m}_k^+=
		\left\{
		(\mathcal{E}_S(\rho\gamma),\gamma):
		\begin{gathered}
		\rho\in\mathcal{O}(\mathcal{L}_e^o(S_N/G))\text{ s.t. }\\
		\rho|_\Sigma=s \text{ and } \gamma\in S_N( \rho )
		\end{gathered}
		\right\}.\nonumber
		\end{equation}
	\end{itemize}
\end{mythm}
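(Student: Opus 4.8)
The plan is to prove both statements simultaneously by induction on the length $k$ of the observed string $s=\sigma_1\cdots\sigma_k$, exploiting the recursive characterization of $\hat{\mathcal{E}}_S$ and $\mathcal{E}_S$ recorded after Definition~\ref{extended string} together with the encoding and IS-based properties of $\Theta$. The structural observation that drives everything is that the two induced updates in Equation~\eqref{eq:induce}, namely $\textbf{m}_i^+=\odot(d_{\Theta}(\textbf{m}_i))$ and $\textbf{m}_{i+1}=\widehat{N\!X}_{\sigma_{i+1}}(\textbf{m}_i^+)$, mirror exactly the supervisor-side recursions $\mathcal{E}_S(\rho\gamma)=U\!R_\gamma(\hat{\mathcal{E}}_S(\rho))$ and $\hat{\mathcal{E}}_S(\rho\gamma\sigma)=N\!X_\sigma(\mathcal{E}_S(\rho\gamma))$, but aggregated over all decision histories $\rho$ consistent with the intruder's observation $s$.

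For the base case $k=0$ (i.e.\ $s=\epsilon$), the only extended string with empty $\Sigma$-projection is $\epsilon$ itself, so $\hat{\mathcal{E}}_I(\epsilon)=\{\hat{\mathcal{E}}_S(\epsilon)\}=\{\{x_0\}\}=\textbf{m}_0$, which gives (i). Since $\mathcal{I}_{S_N}(\epsilon)=(\{x_0\},\{\{x_0\}\})=\imath_0$ and $\Theta$ encodes $S_N$, we get $\Theta(\imath_0)=S_N(\epsilon)$, whence $d_{\Theta}(\textbf{m}_0)=\{(\{x_0\},S_N(\epsilon))\}$ and $\textbf{m}_0^+=\odot(d_{\Theta}(\textbf{m}_0))=\{(U\!R_\gamma(\{x_0\}),\gamma):\gamma\in S_N(\epsilon)\}$. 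Recognizing $U\!R_\gamma(\{x_0\})=U\!R_\gamma(\hat{\mathcal{E}}_S(\epsilon))=\mathcal{E}_S(\gamma)$ then yields (ii).

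For the inductive step, assume (i) and (ii) hold for $s'=\sigma_1\cdots\sigma_{k-1}$. I would first establish (i) for $s=s'\sigma_k$. Unfolding $\textbf{m}_k=\widehat{N\!X}_{\sigma_k}(\textbf{m}_{k-1}^+)$ and substituting the inductive form of $\textbf{m}_{k-1}^+$, every pair $(m,\gamma)\in\textbf{m}_{k-1}^+$ is $(\mathcal{E}_S(\rho\gamma),\gamma)$ for some $\rho$ with $\rho|_\Sigma=s'$ and $\gamma\in S_N(\rho)$; the side condition $\sigma_k\in\gamma$ in the definition of $\widehat{N\!X}$ together with $N\!X_{\sigma_k}(\mathcal{E}_S(\rho\gamma))=\hat{\mathcal{E}}_S(\rho\gamma\sigma_k)$ identifies its contribution with $\hat{\mathcal{E}}_S$ of the extended string $\rho\gamma\sigma_k$. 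A two-way inclusion, using the decomposition of any $\rho''\in\mathcal{O}(\mathcal{L}_e^o(S_N/G))$ with $\rho''|_\Sigma=s$ as $\rho''=\rho\gamma\sigma_k$, then shows $\textbf{m}_k=\hat{\mathcal{E}}_I(s)$. With (i) in hand I would derive (ii): writing $\textbf{m}_k=\{m_k^1,\dots,m_k^p\}$, each $m_k^j$ equals $\hat{\mathcal{E}}_S(\rho)$ for some $\rho$ with $\rho|_\Sigma=s$, so $\mathcal{I}_{S_N}(\rho)=(m_k^j,\textbf{m}_k)$, and by the encoding and IS-based properties $\Theta(m_k^j,\textbf{m}_k)=S_N(\rho)$ is well-defined independently of the chosen $\rho$. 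Applying $\odot$ to $d_{\Theta}(\textbf{m}_k)$ and using $U\!R_\gamma(\hat{\mathcal{E}}_S(\rho))=\mathcal{E}_S(\rho\gamma)$ delivers exactly the claimed expression for $\textbf{m}_k^+$.

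The main obstacle I anticipate is bookkeeping in the inductive step rather than any deep difficulty. One must handle the many-to-one correspondence between decision histories $\rho$ and micro-states of $\textbf{m}_k$, which is resolved precisely by the IS-based condition guaranteeing that $S_N(\rho)$ depends only on $\mathcal{I}_{S_N}(\rho)$. One must also be careful about feasibility of transitions, ensuring that a pair $(m,\gamma)$ with $\sigma_k\in\gamma$ but $N\!X_{\sigma_k}(m)=\emptyset$ does not spuriously introduce the empty micro-state into $\textbf{m}_k$, so that membership in $\widehat{N\!X}_{\sigma_k}(\textbf{m}_{k-1}^+)$ matches exactly the existence of a genuine $\rho\gamma\sigma_k\in\mathcal{L}_e^o(S_N/G)$ with $s$-labeled observation. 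Finally, the ordering of the argument is essential: (ii) at stage $k$ cannot be proved before (i) at stage $k$, because evaluating $\Theta$ at the second component $\textbf{m}_k$ is only justified once $\textbf{m}_k$ is known to coincide with the true $\hat{\mathcal{E}}_I(s)$ that appears in $\mathcal{I}_{S_N}(\rho)$.
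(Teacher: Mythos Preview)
Your proposal is correct and follows essentially the same route as the paper's own proof: induction on $|s|$, with the base case handled by direct computation and the inductive step first establishing (i) via the $\widehat{N\!X}$ update applied to the inductive form of $\textbf{m}_{k-1}^+$, and then (ii) via $\odot$ applied to $d_{\Theta}(\textbf{m}_k)$ after invoking the IS-based/encoding property to identify $\Theta(m_k^j,\textbf{m}_k)$ with $S_N(\rho)$. Your additional remarks on the many-to-one correspondence, the potential empty micro-state, and the necessary ordering of (i) before (ii) are exactly the bookkeeping points the paper's proof treats (or glosses over) implicitly.
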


\begin{proof}
	We prove   by induction on the length of $s$.
	
	\emph{Induction Basis: }
	For $|s|=0$, i.e. $s=\epsilon$, from the definition of $\hat{\mathcal{E}}_I(s)$,  we know that
	\[
	\begin{aligned}
	\hat{\mathcal{E}}_I(\epsilon)=&\{ \hat{\mathcal{E}}_S ( \rho )\in 2^X :   \rho\in \mathcal{O}(\mathcal{L}_e^o(S_N/G)) \text{ s.t. }   \rho|_{\Sigma}=\epsilon   \}\\
	=&\{\hat{\mathcal{E}}_S(\epsilon)\}\\
	=&\{\{\delta(\epsilon)\}\}\\
	=&\{\{x_0\}\}\\
	=&\mathbf{m}_0
	\end{aligned}
	\]
	Since
	$\mathbf{m}_0^+=\odot( d_{\Theta}(\textbf{m}_0))$ and   $d_{\Theta}(\mathbf{m}_0)= \{(m_0,S_N(\epsilon))\}  $,
	we have
	\begin{equation}
	\begin{aligned}
	\textbf{m}_0^+=&\odot(d_{\Theta}(\mathbf{m}_0))\\
	=&\{(U\!R_\gamma(m_0),\gamma):\gamma\in  S_N(\epsilon) \}\\
	=&\left\{(U\!R_\gamma(\hat{\mathcal{E}}_S(\epsilon)),\gamma):
	\gamma\in  S_N(\epsilon)
	\right\}\\
	=&\left\{
	(\mathcal{E}_S(\gamma),\gamma):
	\gamma\in  S_N(\epsilon)
	\right\}
	\nonumber
	\end{aligned}
	\end{equation}
	Note that $\rho=\epsilon$ is the only extended string in 	$\mathcal{O}(\mathcal{L}_e^o(S_N/G))$ such that  $\rho|_\Sigma=s$.
	This completes the induction basis.
	
	\emph{Induction Step: }
	Let us assume that Theorem~\ref{thm1} holds for $|s|=k$.
	Then we want to prove the case of $s\sigma_{k+1}\in P(\mathcal{L}(S_N/G))$.
	By the induction hypothesis, we know that
	\begin{equation}
	\textbf{m}_k^+=
	\left\{
	(\mathcal{E}_S(\rho\gamma),\gamma):
	\begin{gathered}
	\rho\in\mathcal{O}(\mathcal{L}_e^o(S_N/G))\text{ s.t. }\\
	\rho|_\Sigma=s \text{ and } \gamma\in S_N( \rho )
	\end{gathered}
	\right\}.\nonumber
	\end{equation}
	Then, we have
	\[
	\begin{aligned}
	&\mathbf{m}_{k+1}\\
	=&\widehat{N\!X}_{\sigma_{k+1}}(\mathbf{m}_k^+)\\
	=&\{ N\!X_{\sigma_{k+1}}(m): (m,\gamma)\in \mathbf{m}_k^+, \sigma_{k+1}\in \gamma \}\\
	=&\left\{
	N\!X_{\sigma_{k+1}}(\mathcal{E}_S(\rho\gamma)):
	\begin{gathered}
	\rho\in\mathcal{O}(\mathcal{L}_e^o(S_N/G))\text{ s.t. }\\
	\rho|_\Sigma=s, \gamma\in S_N( \rho )\text{ and }\sigma_{k+1}\in \gamma
	\end{gathered}
	\right\}\\	
	=&\{\hat{\mathcal{E}}_S(\rho\gamma\sigma_{k+1}):\rho\in\mathcal{O}(\mathcal{L}_e^o(S_N/G))\text{ s.t. }
	\rho|_\Sigma=s\}\\
	=&\{\hat{\mathcal{E}}_S(\rho'):\rho'\in\mathcal{O}(\mathcal{L}_e^o(S_N/G))
	\text{ s.t. }\rho'|_\Sigma=s\sigma_{k+1}\}\\
	=&\hat{\mathcal{E}}_I(s\sigma_{k+1}).
	\end{aligned}
	\]
	For $\mathbf{m}_{k+1}^+=\odot( d_{\Theta}(\textbf{m}_{k+1}))$.
	Suppose that  $d_{\Theta}(\textbf{m}_{k+1})=\{(m_{k+1}^1,\Theta(m_{k+1}^1,\textbf{m}_{k+1})),\ldots,(m_{k+1}^n,\Theta(m_{k+1}^n,\textbf{m}_{k+1}))\}$.
	Note that we have
	\begin{align}
	\textbf{m}_{k+1}
	=&\{\hat{\mathcal{E}}_S(\rho):\rho\in\mathcal{O}(\mathcal{L}_e^o(S_N/G))
	\text{ s.t. }\rho|_\Sigma=s\sigma_{k+1}\}\nonumber\\
	=&\{m_{k+1}^1,\dots,m_{k+1}^n\}\nonumber
	\end{align}
	For each $\rho\in\mathcal{O}(\mathcal{L}_e^o(S_N/G))$  such that
	$\rho|_\Sigma=s\sigma_{k+1}$, since $S_N$ is IS-based, we have
	$S_N(\rho)=\Theta( \hat{\mathcal{E}}_S(\rho),\textbf{m}_{k+1} )$.
	Then we have the followings
	\begin{equation}
	\begin{aligned}
	&\mathbf{m}_{k+1}^+\\
	=&\odot( d_{\Theta}(\textbf{m}_{k+1}))\\
	=&\{(U\!R_\gamma(m),\gamma):\exists (m,\Gamma)\in d_{\Theta}(\textbf{m}_{k+1})\text{ s.t. } \gamma\in\Gamma\}\\
	=&\left\{(U\!R_\gamma(\hat{\mathcal{E}}_S(\rho)),\gamma):
	\begin{gathered}
	\rho\in\mathcal{O}(\mathcal{L}_e^o(S_N/G))\text{ s.t. }\\
	\gamma\in S_N(\rho) \text{ and }
	\rho|_\Sigma=s\sigma_{k+1}
	\end{gathered}
	\right\}\\
	=&\left\{
	(\mathcal{E}_S(\rho\gamma),\gamma):
	\begin{gathered}
	\rho\in\mathcal{O}(\mathcal{L}_e^o(S_N/G)),\text{ s.t. }\\
	\gamma\in S_N(\rho)\text{ and }
	\rho|_\Sigma=s\sigma_{k+1}
	\end{gathered}
	\right\}
	\end{aligned}\nonumber
	\end{equation}
	This completes the induction step, i.e. (ii) holds.
\end{proof}

For any augmented macro-state $\textbf{m}^+$,
we define
\[
M(\textbf{m}^+)=\{ m\in M: (m,\gamma)\in \textbf{m}^+  \}
\]
as the macro-state obtained by removing the control decision components from $\textbf{m}^+$.
Then the following result reveals that the above defined states set
$M(\textbf{m}_k^+)$ are indeed the state estimate of the intruder ${\mathcal{E}}_I(s)$.
\begin{mycol}\label{corollary1}
	Let $\Theta$ be an IS-mapping that encodes an IS-based  supervisor $S_N$ and $s=\sigma_1\ldots\sigma_k\in P(\mathcal{L}(S_N/G))$ be an observable string available to the intruder.
	Let $\textbf{m}_k^+$ be the state reached according to the information-flow.
	Then we have
	\[
	M(\textbf{m}_k^+)={\mathcal{E}}_I(s)
	\]
\end{mycol}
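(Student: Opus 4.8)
The plan is to obtain the result as an essentially immediate consequence of Theorem~\ref{thm1}(ii), which already computes $\textbf{m}_k^+$ explicitly; the corollary only asks to erase the control-decision coordinate and recognize what remains. First I would invoke Theorem~\ref{thm1}(ii) to write $\textbf{m}_k^+$ as the set of augmented micro-states $(\mathcal{E}_S(\rho\gamma),\gamma)$ indexed by pairs $(\rho,\gamma)$ with $\rho\in\mathcal{O}(\mathcal{L}_e^o(S_N/G))$, $\rho|_\Sigma=s$, and $\gamma\in S_N(\rho)$. Applying the projection $M(\cdot)$, which by definition simply discards the second coordinate of each augmented micro-state, yields
\[
M(\textbf{m}_k^+)=\{\mathcal{E}_S(\rho\gamma):\rho\in\mathcal{O}(\mathcal{L}_e^o(S_N/G)),\ \rho|_\Sigma=s,\ \gamma\in S_N(\rho)\}.
\]
It then remains to identify this set with $\mathcal{E}_I(s)=\{\mathcal{E}_S(\rho'):\rho'\in\mathcal{O}(\mathcal{L}_e^d(S_N/G)),\ \rho'|_\Sigma=s\}$.

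The substantive step is to show that, as $(\rho,\gamma)$ ranges over the index set above, the concatenation $\rho\gamma$ ranges over exactly $\{\rho'\in\mathcal{O}(\mathcal{L}_e^d(S_N/G)):\rho'|_\Sigma=s\}$. This is established by unwinding the recursive definition of $\mathcal{L}_e(S_N/G)$ in Definition~\ref{extended string}. For the forward inclusion, an extended string $\rho\in\mathcal{O}(\mathcal{L}_e^o(S_N/G))$ ends in an observable event (or is $\epsilon$), so the observable branch of the recursion shows that appending any $\gamma\in S_N(\rho)$ produces a legal extended string ending in a control decision; hence $\rho\gamma\in\mathcal{O}(\mathcal{L}_e^d(S_N/G))$, and $(\rho\gamma)|_\Sigma=\rho|_\Sigma=s$ since a control decision contributes no event. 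For the reverse inclusion, every $\rho'\in\mathcal{O}(\mathcal{L}_e^d(S_N/G))$ with $\rho'|_\Sigma=s$ is, by the definition of $\mathcal{O}$, an alternating sequence over $(\Gamma\Sigma_o)^*\Gamma$ terminating in a control decision $\gamma$ preceded by an observable event; stripping off this last $\gamma$ gives $\rho'=\rho\gamma$ with $\rho\in\mathcal{O}(\mathcal{L}_e^o(S_N/G))$, $\rho|_\Sigma=s$, and—because the event immediately preceding $\gamma$ is observable—$\gamma\in S_N(\rho)$. Thus $\rho'\mapsto(\rho,\gamma)$ is a bijection between the two index sets, and substituting back gives $M(\textbf{m}_k^+)=\mathcal{E}_I(s)$.

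I expect no serious obstacle, since all the nontrivial information-flow reasoning is already packaged in Theorem~\ref{thm1}(ii); the only point demanding care is confirming that appending or stripping the final control decision stays within the observable branch of Definition~\ref{extended string}, i.e.\ that the event immediately before the last decision is observable. This holds automatically because elements of $\mathcal{O}(\mathcal{L}_e^d(S_N/G))$ are alternating strings in which every retained event lies in $\Sigma_o$, by the construction of $\mathcal{O}$. As an alternative that avoids stating the bijection explicitly, one could chain the equalities $M(\textbf{m}_k^+)=\bigcup\{\{\mathcal{E}_S(\rho\gamma)\}\}=\mathcal{E}_I(s)$ by reusing the same manipulation of $\mathcal{O}$ and $\rho|_\Sigma$ that appears in the proof of Proposition~\ref{pro1}; however, routing the argument through Theorem~\ref{thm1}(ii) keeps the dependence transparent and the proof short.
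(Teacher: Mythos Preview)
Your proposal is correct and follows essentially the same route as the paper: invoke Theorem~\ref{thm1}(ii), apply $M(\cdot)$ to drop the control-decision coordinate, and then identify the resulting set with $\mathcal{E}_I(s)$ by recognizing that $\rho\gamma$ ranges over $\{\rho'\in\mathcal{O}(\mathcal{L}_e^d(S_N/G)):\rho'|_\Sigma=s\}$. If anything, you spell out the index-set bijection more carefully than the paper, which simply asserts the corresponding equality in one line.
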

\begin{proof}
	By Theorem \ref{thm1}, we have
	\begin{equation}
	\textbf{m}_k^+=
	\left\{
	(\mathcal{E}_S(\rho\gamma),\gamma):
	\begin{gathered}
	\rho\in\mathcal{O}(\mathcal{L}_e^o(S_N/G))\text{ s.t. }\\
	\rho|_\Sigma=s \text{ and } \gamma\in S_N( \rho )
	\end{gathered}
	\right\}.\nonumber
	\end{equation}
	Therefore,
	\begin{equation}
	\begin{split}
	M(\textbf{m}_k^+)=	
	&\left\{
	\mathcal{E}_S(\rho\gamma):
	\begin{gathered}
	\rho\in\mathcal{O}(\mathcal{L}_e^o(S_N/G))\text{ s.t. }\\
	\rho|_\Sigma=s \text{ and } \gamma\in S_N( \rho )
	\end{gathered}
	\right\}\\
	=&\{\mathcal{E}_S(\rho'):\rho'\in\mathcal{O}(\mathcal{L}_e^d(S_N/G))\text{ s.t. }\rho'|_\Sigma=s\}\\
	=&\mathcal{E}_I(s)
	\nonumber
	\end{split}
	\end{equation}
	This completes the proof.
\end{proof}

We explain the above concepts by the following example.

\begin{myexm}\upshape
Let us still consider system $G$ in Fig.\ref{system}.
We consider a non-deterministic supervisor $S_N$ defined by
\[
\forall \rho\in (\Gamma\Sigma_o)^*: S_N(\rho)=\{\{c_1\},\{c_2\}\}.
\]
Clearly, this supervisor is IS-based and it can be encoded by IS-mapping
$\Theta: I\to 2^\Gamma$ such that $\forall \imath\!\in\! I: \Theta(\imath)\!=\!\{\{c_1\},\{c_2\}\}$.

Initially, the  supervisor's estimate is $m_0=\{0\}$ and
the intruder's estimate of supervisor's estimation is $\mathbf{m}_0=\{\{0\}\}$, where the macro-control-decision induced by $\Theta$ is
\[
d_{\Theta}(\mathbf{m}_0)=\{  (\{0\},\Theta( \{0\},\{\{0\}\}  ))        \}=\{(\{0\},\{\{c_1\},\{c_2\}\})\}.
\]
Then  the intruder's knowledge   is updated to
\begin{align}
   \mathbf{m}_0^+
&= \odot( d_{\Theta}(\mathbf{m}_0)   ) \nonumber\\
&= \{ ( U\!R_{   \{c_1\}    }  (  \{0\}   ) ,\{c_1\} )  , ( U\!R_{   \{c_2\}    }  (  \{0\}   )     ,\{c_2\}  )   \}\nonumber\\
&= \{(\{0,1\},\{c_1\}),(\{0,3\},\{c_2\})\}.\nonumber
\end{align}
When  event $o_1$ is observed,
the intruder updates its knowledge  to
\begin{align}
   \mathbf{m}_1
&=\widehat{N\!X}_{o_1}(\mathbf{m}_0^+)  = \{ N\!X_{o_1}( \{0,1\}  )  ,  N\!X_{o_1}( \{0,3\}  )  \}\nonumber\\
&= \{  \{4\} ,  \{5\}  \},\nonumber
\end{align}
which means that the intruder guesses that the supervisor's state-estimate is either $\{4\}$ or $\{5\}$ based on the information available.
Again, the macro-control-decision at $\mathbf{m}_1$ is
\begin{align}
d_{\Theta}(\mathbf{m}_1)
=&\{  (\{4\},\Theta( \{4\}, \mathbf{m}_1  ))  , (\{5\},\Theta( \{4\}, \mathbf{m}_1  ))\} \nonumber\\
=&\{(\{4\},\{\{c_1\},\{c_2\}\}),(\{5\},\{\{c_1\},\{c_2\}\})\}, \nonumber
\end{align}
which leads to
\begin{align}
\mathbf{m}_1^+
&= \odot( d_{\Theta}(\mathbf{m}_1)   ) \nonumber\\
&=
\left\{
\begin{array}{l l}
 ( U\!R_{   \{c_1\}    }  (  \{4\}   ) ,\{c_1\} )  , ( U\!R_{   \{c_2\}    }  (  \{4\}   )     , \{c_2\}  ),\\
( U\!R_{   \{c_1\}    }  (  \{5\}   ) ,\{c_1\} )  , ( U\!R_{   \{c_2\}    }  (  \{5\}   )     ,\{c_2\}  )
\end{array}
\right\}\nonumber\\
&= \{\!(\{4\},\!\{c_1\}),(\{4\},\!\{c_2\}),(\{5,6\},\!\{c_1\}),(\{5,7\},\!\{c_2\})\!\}.\nonumber
\end{align}
Similarly, from $\mathbf{m}_1^+$, observations can be observed and so forth.
\end{myexm}

\subsection{Decode Supervisor from IS-Mapping}

Finally, we are ready to discuss how to \emph{decode} an IS-based non-deterministic supervisor from an IS-mapping $\Theta: I\to 2^\Gamma$.
The decoded non-deterministic supervisor   is denoted by $S_\Theta$.
Let $\textsc{Dom}(\Theta)=\{  \imath\in I: \Theta( \imath )!\}$ be the domain of $\Theta$.
We say that IS-mapping $\Theta$ is \emph{reachability-closed} if
\[
\textsc{Reach}_{\Theta}(\imath_0) \subseteq \textsc{Dom}(\Theta),
\]
where  $\imath_0=( \{x_0\},\{ \{x_0\}  \} )$ is the initial information-state.
Clearly, $\Theta$ is necessarily to be reachability-closed;
otherwise, the supervisor cannot make decision after some executions.
Without loss of generality, we can further assume that $\textsc{Reach}_{\Theta}(\imath_0) = \textsc{Dom}(\Theta)$ as the mapping information of those unreachable states are not used.

When  IS-mapping $\Theta$ is reachability-closed,  we can decode a supervisor $S_\Theta$ as follows.
For any decision history $\rho=\gamma_0\sigma_1\gamma_1\dots  \gamma_{n-1}\sigma_n $, we have
\begin{equation}
S_\Theta(\rho)= \Theta(   \hat{\mathcal{E}}_{S}(\rho) , \hat{\mathcal{E}}_I(\rho|_{\Sigma})         ).
\end{equation}
Note that, based on the previous discussion,  both  $\hat{\mathcal{E}}_{S}(\rho)$  and $\hat{\mathcal{E}}_I(\rho|_{\Sigma}) $ can be computed recursively based on $\Theta$.
Therefore, in practice,   $S_\Theta(\rho)$ can be executed online according to Algorithm~1.
Specifically, we use parameters
$m,m^+,\mathbf{m}$ and $ \mathbf{m}^+ $ to represent
$\hat{\mathcal{E}}_{S}(\rho),  {\mathcal{E}}_{S}(\rho), \hat{\mathcal{E}}_{I}(\rho|_{\Sigma}) $ and ${\mathcal{E}}_{I}(\rho|_{\Sigma})$, respectively.
Note that the updates of $\mathbf{m}$ and $\mathbf{m}^+$ use  the online observation $\sigma$ and the IS-mapping $\Theta$ to generate a non-deterministic control decision set $\Gamma$, in which an actual control decision applied $\gamma\in \Gamma$ is chosen randomly.
However,  the updates of $m$ and $m^+$ only use  the online observation $\sigma$ and the  actual decision $\gamma$ applied.

\begin{algorithm}
	\SetKwData{Left}{left}
	\SetKwData{Up}{up}
	\SetKwFunction{FindCompress}{FindCompress}
	\SetKwInOut{Input}{input}
	\SetKwInOut{Output}{output}
	
	\Indp
	\BlankLine
	\nl   $m \gets\{x_0\}$ and $\mathbf{m}\gets \{\{x_0\}\}$ and $\rho\gets \epsilon$\;
	\nl  \While{$\rho=\epsilon$ or a new event $\sigma\in \Sigma_o$ is observed  }
	{
	\If{a new event $\sigma\in \Sigma_o\cap \gamma$ is observed}
	{
	\nl   $m\gets N\!X_{\sigma}( m^+  )$ and $\mathbf{m}\gets  \widehat{N\!X}_{\sigma}(\mathbf{m}^+)$\;
	\nl   $\rho\gets \rho\sigma$\;
    }	
	\nl
	Define $S_\Theta(\rho)\gets \Theta( m ,\mathbf{m})$ as the current non-deterministic control decision\;
	\nl
	Randomly pick $\gamma\in S_\Theta(\rho)$ and apply this control decision online\;
	\nl
	$ m^+ \gets U\!R_{\gamma}( m  )$ and
	$\mathbf{m}^+\gets \odot(d_{\Theta}(\mathbf{m}))$\;	
	\nl   $\rho\gets \rho\gamma$\;
	}
	\caption{Online Decoding of IS-Mapping $\Theta$\label{constructS}}
\end{algorithm}

By understanding how an IS-mapping $\Theta$ can be decoded as an IS-based supervisor,
hereafter, we will also refer to a reachability-closed  IS-mapping $\Theta$ as an IS-based supervisor directly.
In order to solve the general opacity enforcement problem as formulated in Problem~1,  our approach is to  restrict  our solution space to IS-based supervisors and solve the following IS-mapping synthesis problem.

\begin{myprob}\label{prob2}
(Information-State-Based Opacity Enforcement Problem)
Given system $G$ and secret states $X_S\subseteq X$, synthesize an IS-based supervisor $S_\Theta: (\Gamma\Sigma_o)^* \rightarrow 2^\Gamma$ decoded from IS-mapping $\Theta: I\rightarrow 2^\Gamma$, such that $S_\Theta /G$ is opaque w.r.t. $X_S$ and $\Sigma_o$.		
\end{myprob}

\begin{remark}\upshape
Problem 2 essentially restricts the solution space of Problem~1 to a finite domain.
Clearly, if there exists an IS-based supervisor that enforces opacity, then there exists a non-deterministic opacity-enforcing supervisor. However, the following question arise immediately:
\emph{whether or not the non-existence of an IS-based supervisor also implies the non-existence of a general supervisor?}
We will show later in Section~\ref{section6} that   there exists a  non-deterministic opacity enforcing supervisor
	\emph{if and only if} there exists an IS-based one.	In other words, restricting our attention to Problem~2 is without loss of generality for the solvability of Problem~1.
\end{remark}

\section{Synthesis of IS-Based Supervisors}\label{section5}
In this section, we discuss how to synthesize an IS-based supervisor that enforces opacity.
We first introduce the structure of the generalized bipartite transition system.
Then we present a synthesis algorithm that returns a solution to Problem \ref{prob2}.

\subsection{Bipartite Transition System}
By the  analysis in the previous section, we see that the update of the intruder's knowledge consists of two steps:
one is when the supervisor picks a macro-control-decision and the other is when a new observable event occurs.
To separate these two steps, we  adopt the idea of the bipartite transition systems (BTS) proposed in \cite{yin2016uniform}.
Here, we call the proposed structure \emph{generalized} BTS (G-BTS) as it captures, in a more general manner,  both the supervisor's estimate and the intruder's knowledge about the supervisor, while the original BTS in \cite{yin2016uniform} only captures the supervisor's estimate.

\begin{mydef}
	A generalized bipartite transition system (G-BTS) $T$ w.r.t.\ $G$ is a $7$-tuple
	\[
	T=(Q_Y,Q_Z,h_{Y\!Z},h_{ZY},\Sigma_o,D,y_0).
	\]
	where
	\begin{itemize}
		\item
		$Q_Y\subseteq \mathbb{M}$ is a set of macro-states;
		\item
		$Q_Z\subseteq \mathbb{M}^+$ is the set of augmented macro-states;
		\item
		$h_{Y\!Z}:Q_Y\times D\rightarrow Q_Z$ is the transition function from $Y$-states to $Z$-states satisfying:
		for any $h_{Y\!Z}(\mathbf{m},d)=\mathbf{m}^+$, we have
		\begin{itemize}
			\item
			$d$ is \emph{compatible} with $\mathbf{m}$; and
			\item
			$\mathbf{m}^+=\odot(d)$.
		\end{itemize}
		\item
		$h_{ZY}:Q_Z\times\Sigma_o \rightarrow Q_Y$ is the transition function from $Z$-states to $Y$-states satisfying:
		for any $h_{ZY}(\mathbf{m}^+,\sigma)=\mathbf{m}$, $\sigma\in\Sigma_o$, we have
		\begin{itemize}
			\item
			$\mathbf{m} = \widehat{N\!X}_{\sigma}( \mathbf{m}^+   )$.
		\end{itemize}
		\item
		$D$ is the set of macro-control-decisions;
		\item
		$\Sigma_o$ is the set of observable events of system $G$;
		\item
		$y_0=\{\{x_0\}\} \in Q_Y$ is the initial $Y$-state.
	\end{itemize}
\end{mydef}

The G-BTS essentially captures the information-flow analyzed in Section~\ref{section4}.
Specifically, at each $Y$-state, the IS-based supervisor makes  a macro-control-decision $d$ and then moves to a $Z$-state by updating the intruder's knowledge via unobservable reaches under the issued macro-control-decision $d$.
When a new observable event $\sigma\in \Sigma_o$ occurs at a $Z$-state, we move to a $Y$-state by computing the observable reach, and so forth.

\begin{figure*}[htbp]
	\centering
	\includegraphics[width=1\linewidth]{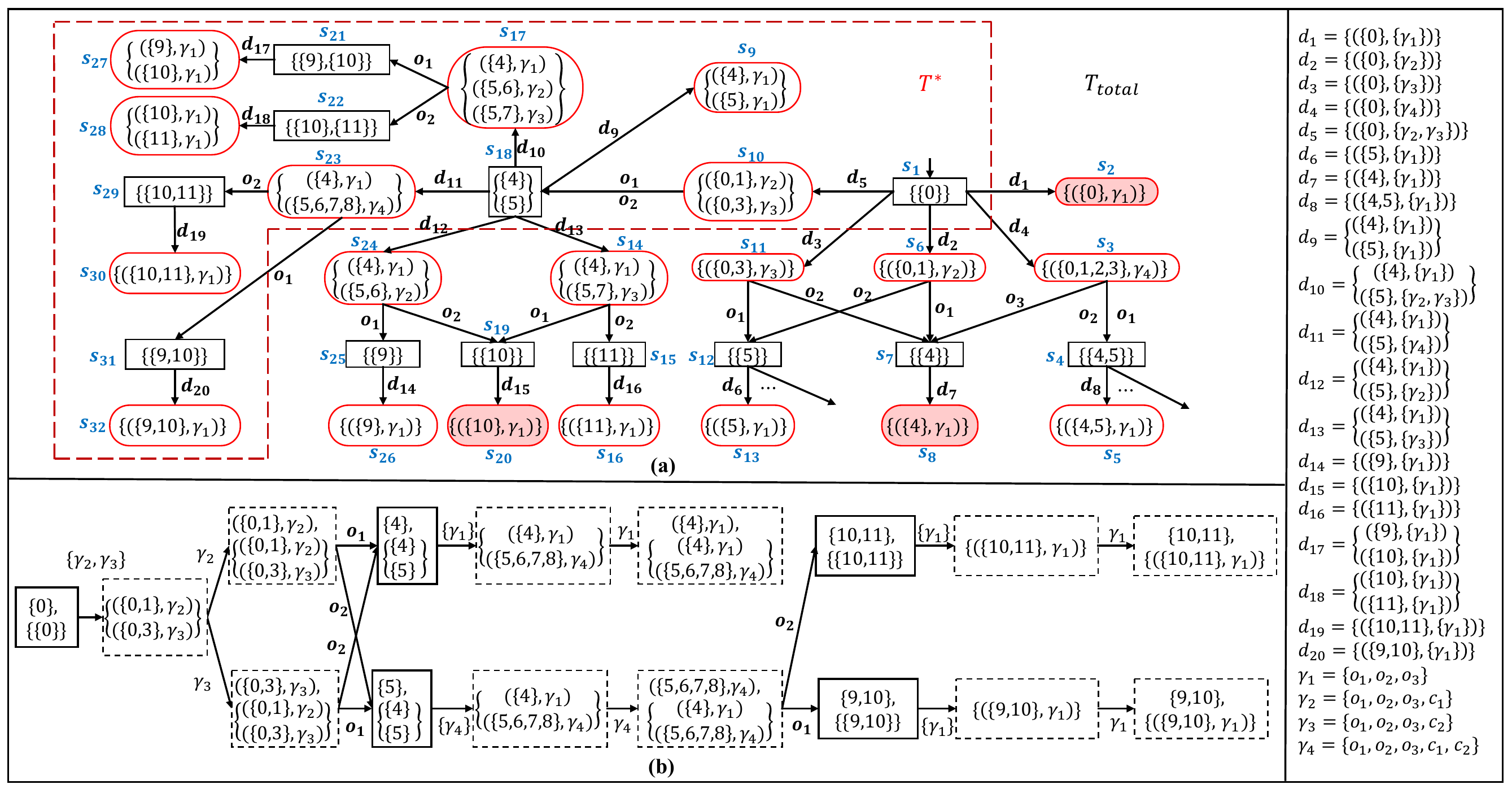}
	\caption{(a) An example of the  G-BTS, where		
		rectangular  states represent   $Y$-states and oval states represent $Z$-states.			
		(b) Decision diagram of the synthesized non-deterministic supervisor.
	}
	\label{G-BTS}
\end{figure*}

\begin{myexm}\label{exm3}\upshape
	Again, we consider system $G$ in Fig.~\ref{system}.
	An example of the G-BTS is shown in Fig.~\ref{G-BTS}(a),
	in which rectangular states represent $Y$-states and oval states represent $Z$-states.
    Some states are omitted in Fig.~\ref{G-BTS}(a) for simplicity.
	States are named by $s_1,\dots,s_{32}$.
	The initial $Y$-state is $s_1=\{\{0\}\}$, from which macro-control-decisions $d_1,\dots,d_5$ that are compatible with $s_1$ can be made.
	For example, if the macro-control-decision made is $d_5=\{  (\{0\} ,\{   \{c_1 \},\{c_2 \} \} )    \}$,
	then we move to $Z$-state $s_{10}=\odot(d_5)$.
	From this state,  observable events $o_1$ and $o_2$ can occur, and both lead to the same $Y$-state $s_{18}$.
    From $Y$-state $s_{18}$, macro-control-decisions $d_9,\ldots, d_{13}$ that are compatible with $s_{18}$ can be made.
    If the macro-control-decision made is $d_{11}$, then we move to $s_{23}=\odot(d_{11})$ and so forth.
\end{myexm}

\subsection{Synthesis of IS-Based Supervisors}

Now, we present how to synthesize IS-based non-deterministic opacity-enforcing supervisors represented by an IS-mapping.
Given a G-BTS $T$, for any $Y$-state $y\in Q_Y$,
we define
\[
C_T(y) := \{d\in D :h_{Y\!Z}(y,d)!\}
\]
as the  set of macro-control-decisions defined at $y$ in $T$.
Also, we say that
a $Y$-state $y$ is \emph{consistent} if $C_T(y)\neq \emptyset$;  and
a $Z$-state $z$ is \emph{consistent} if, for any $\sigma\in \Sigma_o$, we have
\[
h_{ZY}(z,\sigma)!\Leftrightarrow   (\exists (m,\gamma)\!\in\! z )[  {N\!X}_\sigma(m)\neq \emptyset \wedge \sigma\!\in\! \gamma  \}].
\]
Intuitively, a $Y$-state is consistent if at least one macro-control-decision is defined and a $Z$-state is consistent if all feasible events are defined.
Consistency is required for the purpose of control as the supervisor should be able to provide a control decision for any observation.
We denote by $Q_{const}^T$ the set of consistent states    in $T$
and we say that $T$  is consistent if all reachable states are consistent.

As  discussed earlier, we restrict our attention to IS-based supervisors.
Our approach for synthesizing non-deterministic opacity-enforcing supervisors consists of the following two steps:
\begin{enumerate}[(i)]
	\item
	construct the largest  consistent G-BTS  in which all states are not secret-revealing;
	\item
	extract one IS-based supervisor in the form of an IS-mapping from this largest G-BTS.
\end{enumerate}
Since such an IS-based supervisor is extracted from $T$,  by Theorem~\ref{thm1} and Corollary \ref{corollary1},
we know that, upon the occurrence of any decision history, the $Z$-state $z\in \mathbb{M}^+$ reached is essentially the set of all possible state-estimates of the supervisor.
Moreover, by  Proposition \ref{pro1}, we know that $\bigcup M(z) = X_I(s)$, where $s$ is the observation leading to the $Z$-state.
Therefore,  to make sure that the  closed-loop system $S_\Theta/G$ is opaque,
it suffices to guarantee that, for any $Z$-state  $z\in \mathbb{M}^+$ reached, we have
\[
\bigcup M(z)\nsubseteq X_S.
\]
To this end, we define
\[
Q_{reveal}=\{z\in \mathbb{M}^+:  \bigcup M(z)\subseteq X_S\}
\]
as the set of \emph{secret-revealing} $Z$-states.

In order to synthesize  an IS-based supervisor,
first, we  construct the largest G-BTS w.r.t.\ $G$ that enumerates all the feasible transitions satisfying the constraints of $h_{Y\!Z}$ and $h_{ZY}$.
We  denote  such an all-feasible G-BTS by $T_{total}$.
Then, we need to delete all   secret-revealing $Z$-states in $T_{total}$ and obtain a new G-BTS
\[
T_0=T_{total} \!\!\restriction\!_{ (Q_Y\cup Q_Z)\setminus Q_{reveal}},
\]
where $T\!\!\restriction\!\!_Q$ denotes the G-BTS obtained by restricting the state-space of $T$ to  $Q\subseteq Q_Y\cup Q_Z$.

However, by deleting secret-revealing states, the resulting G-BTS may become inconsistent.
Hence, we also need to delete   inconsistent states recursively.
Specifically, we define an operator $F$ that maps a G-BTS to a new G-BTS by:
\[
F: T \mapsto T \!\!\restriction\!\!_{Q_{const}^T},
\]
and we define
\[
T^*=\lim_{k\to\infty}F^k(T_0)
\]
as the largest consistent G-BTS in which there is no secret-revealing state.
The existence of the  supremal fixed-point as well as the finite-convergence of iteration follow directly from the computation of winning region in two-player games \cite{gradel2002automata} or the well-known supremal controllable sub-language \cite{Lbook}.

The construction of $T^*$ follows directly from its definition and one can proceed in two steps.
First, we construct $T_0$ by a depth-first search or a breadth-first search  from  the initial $Y$-state $y_0$.
Specifically, at each state encountered, one needs to consider all possible successor states, until reaching a secret-revealing state  or a state that has been visited.  Second,  we  prune inconsistent states from $T_0$  by iterations.
Specifically, we need to remove $Y$-states having no successor and $Z$-states at which some feasible transitions are undefined, until the structure converges.
Similar searching and pruning procedure can  be found in the literature; see, e.g., Algorithm~1 in  \cite{yin2016uniform}.
We illustrate this procedure by the following example.

\begin{myexm}\upshape
Consider again system $G$ in Fig.~\ref{system}.
First, we construct the largest G-BTS $T_{total}$ by enumerating all possible transitions, which is in fact the structure shown in Fig.~\ref{G-BTS}(a).
For sake of simplicity, as we discussed earlier, redundant  macro-control-decisions are omitted in $T_{total}$.
For example,  $d=\{(\{0\},\{\gamma_1,\gamma_2\})\}$ is not listed at state $s_1$, since $\gamma_1 \subset \gamma_2$ and macro-control-decision $d_2$ is sufficient enough to carry this information.

Note that $Z$-states $s_2$, $s_8$ and $s_{20}$ are secret-revealing states since $\bigcup M(s_2)=\{0\}\subseteq X_S$, $\bigcup M(s_{8})=\{4\}\subseteq X_S$ and $\bigcup M(s_{20})=\{10\}\subseteq X_S$.
Therefore, we need to delete states $s_2$, $s_8$ and $s_{20}$ to obtain $T_0$.
However, this creates inconsistent states $s_7$ and $s_{19}$ as no macro-control-decision is defined.
Therefore, these two states are removed when applying operator $F$ for the first time.
Again, this further creates inconsistent states $s_3$, $s_6$, $s_{11}$, $s_{14}$ and $s_{24}$ since some feasible observations are not defined.
Therefore, these states and the associated transitions are again deleted when applying operator $F$ for the second time.
This yields the final structure $T^*$  including states  $s_1$,$s_9$,$s_{10}$, $s_{17}$, $s_{18}$, $s_{21}$, $s_{22}$, $s_{23}$, $s_{27},\ldots,s_{32}$,
which is the largest consistent G-BTS having no secret-revealing state.
\end{myexm}

\begin{algorithm}[htbp]
	\SetKwData{Left}{left}
	\SetKwData{Up}{up}
	\SetKwFunction{FindCompress}{FindCompress}
	\SetKwInOut{Input}{input}
	\SetKwInOut{Output}{output}
	
	\Input{$T^*$}
	\Output{$\Theta^*$}
	\Indp
	\BlankLine
	\nl
	$y_0\gets \{\{x_0\}\}$, \textsc{Visited}$\gets \{y_0\}$, $\textsc{Dom}(\Theta^*)\gets\emptyset$  \;
	\nl\If{$y_0$ is not in $T^*$}
	{
	\nl \textbf{return} ``no solution" \;
    }
	\nl
	{DoDFS}($\Theta^*, y_0   $)\;
	\nl
	\textbf{return} {$\Theta^*$}\;
	\BlankLine
	\Indm
	{\bf procedure} DoDFS($\Theta^*, y $)\;
	\Indp
	\nl
	 choose a macro-control-decision $d \in C_{T^*}(y)$ \;
	\nl
	\For{$m\in y$}
	{
	\nl $\textsc{Dom}(\Theta^*)\gets\textsc{Dom}(\Theta^*) \cup \{(m,y)\}  $\;
	\nl $\Theta^*(m,y)\gets \Gamma_{m,d}$,
	where $\Gamma_{m,d}$ is the unique decision such that $(m,\Gamma_{m,d}) \in d$\;
    }
	\nl
	\For{$\sigma\in \Sigma_o$ such that $h_{ZY}(h_{Y\!Z}(y,d) ,\sigma)=y'$}
	{
	\If{$y' \not\in \textsc{Visited}$}
	{
	\nl
	\textsc{Visited}$\gets$ \textsc{Visited}$\cup\{y'\}$  \;
	\nl
	DoDFS($\Theta^*, y'$)\;
	}
    }
	\BlankLine
	\Indm
	
	\caption{Synthesis of IS-Based Supervisor  $\Theta^*$\label{synthesis}}
\end{algorithm}

Based on $T^*$,    Algorithm~2 is provided to synthesize an IS-based non-deterministic supervisor in the form of an IS-mapping via a depth-first search.
Specifically, we start from the initial  $Y$-state and
pick a macro-control-decision $d$ from the set of all macro-control-decisions defined at $y$.
Then for each pair $(m,y)$, where $m\in y$, we use $d$ to define the mapping value for information-state $(m,y)$, which is the unique non-deterministic  decision set associated with $m$ in $d$.
Then we move to the unique $Z$-state reached under macro-control-decision $d$ and consider all successor $Y$-states by  considering all possible observable events.
If the new $Y$-state has not yet been visited, then we
repeat the selection procedure by making a recursive call of procedure DoDFS until all reachable information-states are traversed.
The computed IS-mapping $\Theta^*$ is reachability-closed by construction; hence can be used to decode a corresponding IS-based supervisor $S_{\Theta^*}$ according to Algorithm~1.

We show the computation procedure in Algorithm~2 by the following example.

\begin{myexm}\upshape
We still consider our running example with $T^*$ shown in Fig.~\ref{G-BTS} and we use Algorithm~2 to synthesize an IS-mapping from $T^*$. The algorithm starts from the initial $Y$-state $s_1=\{\{0\}\}$,
at which the macro-control-decision  in $T^*$ is unique.
Therefore, the  supervisor will pick $d_5$ which induces
partial mapping value $\Theta^*( \{0\}, \{\{0\}\} )=\{\gamma_2,\gamma_3\}$.
By choosing $d_5$, we move to $Z$-state $s_{10}$ and we need to consider all possible successor $Y$-states of $s_{10}$.
Here, both $o_1$ and $o_2$ from $s_{10}$ leads to $Y$-state $s_{18}=\{\{4\},\{5\}\}$, where three macro-control-decisions $d_{9}, d_{10}, d_{11}$ are defined.
Suppose that the supervisor chooses $d_{11}=\{ (\{4\},\{\gamma_1\}), (\{5\},\{\gamma_4\})    \}$.
This again induces
partial mapping values $\Theta^*( \{4\}, \{\{4\},\{5\}\} )=\{\gamma_1\}$
and $\Theta^*( \{5\}, \{\{4\},\{5\}\} )=\{\gamma_4\}$.
If observable event $o_2$ occurs, $Y$-state $s_{29}$ is reachable and the macro-control-decision defined is unique.
Therefore, by choosing $d_{19}$ at $s_{29}=\{\{10,11\}\}$,
partial mapping value
$\Theta^*( \{10,11\}, \{\{10,11\}\} )=\{\gamma_1\}$ is induced.
If observable event $o_1$ occurs, $Y$-state $s_{31}$ is reachable and the macro-control-decision defined is $d_{20}$.
Partial mapping value
$\Theta^*( \{9,10\}, \{\{9,10\}\} )=\{\gamma_1\}$ is induced.
This completes the construction of reachability-closed IS-mapping $\Theta^*$, which can also be represented as the decision diagram shown in Fig.~\ref{G-BTS}(b).
\end{myexm}

\begin{remark}\upshape
The main purpose of this paper is to synthesize a non-deterministic supervisor that guarantees opacity. Our focus is the solvability of this problem and whether or not the synthesized solution is maximally-permissive is out of the main scope of this work. Here, we provide a heuristic approach to improve the permissiveness of this solution. In line~6 of Algorithm~2, we do not put specific criterion for which macro-control-decision to choose from $C_{T^*}(y)$.
To  enhance the permissiveness of the supervisor, we can pick a \emph{locally maximal} macro-control-decision at each $Y$-state.
Formally, given two non-deterministic   decision sets $\Gamma_1$ and $\Gamma_2$, we denote
\begin{itemize}
	\item
	 by $\Gamma_1\leq \Gamma_2$ if
	$\forall \gamma\in   \Gamma_1 ,\exists\gamma'\in \Gamma_2: \gamma\subseteq \gamma'$;
	and
	\item
	by $\Gamma_1<\Gamma_2$ if
	$\Gamma_1\leq \Gamma_2 \text{ and }\exists \gamma\in   \Gamma_1,\exists\gamma'\in \Gamma_2: \gamma\subset  \gamma'$.
\end{itemize}
Then  for each $Y$-state $y=\{m_1,\ldots,m_k\}$ in $T^*$ and  two macro-control-decisions $d_1$ and $d_2$ defined at $y$, where $d_1=\{(m_1,\Gamma_1),\cdots,(m_k,\Gamma_k)\}$, $d_2=\{(m_1,\Gamma_1'),\cdots,(m_k,\Gamma_k')\}$,
we say $d_2$ is more permissive than $d_1$, denoted by
$d_1< d_2$ if
\begin{itemize}
	\item
	$\forall i\in \{1,\dots,k\},\Gamma_i\leq \Gamma_i'$; and
	\item
	$\exists  i\in \{1,\dots,k\}, \Gamma_i<\Gamma_i' $.
\end{itemize}
Therefore, in line~6 of Algorithm~2, one can choose  a locally maximal macro-control-decision $d\in C_{T^*}(y)$ in the sense of
\[
\forall  d'\in C_{T^*}(y): d\not< d'.
\]
For example,  for $T^*$  in Fig.~\ref{G-BTS}(a), there are three macro-control-decisions
$d_9=\{ (\{4\}  , \{\gamma_1\}   ) , (\{5\}  , \{\gamma_1\}  )   \},
d_{10}=\{ \!(\!\{4\}  , \!\{\gamma_1\}  \! ) , (\!\{5\}  ,\! \{\gamma_2,\gamma_3\}  \!)  \! \}$ and $d_{11}=\{\! (\!\{4\}  , \!\{\gamma_1\}  \! ) , (\!\{5\}  , \!\{\gamma_4\}  \!)   \!\}$ defined at $s_{18}$.
Then $d_{11}$ is a locally maximally macro-control-decision among these three.
For example, we have $d_{10}<d_{11}$ since
$\gamma_2\!\subset\! \gamma_4$ and $\gamma_3\!\subset\!\gamma_4$.
Therefore, for the sake of permissiveness, the IS-mapping synthesis procedure can pick $d_{11}$ instead of $d_{9}$ or $d_{10}$.
\end{remark}

We conclude this section by discussing the complexity of the proposed supervisor synthesis algorithm.
To construct the largest consistent G-BTS $T^*$, first, we need to build $T_{total}$, which contains at most $2^{2^{|X|}}$ $Y$-states and $2^{2^{|X|+|\Sigma_c|}}$ $Z$-states.
For each $Y$-state, there are at most $2^{|X|+|\Sigma_c|}$  transitions defined and
for each $Z$-state, there are at most $|\Sigma_o|$ transitions defined.
Overall, $T_{total}$ contains, in the worst-case, $2^{2^{|X|+|\Sigma_c|}}+ 2^{2^{|X|}}$ states and $2^{2^{|X|}}\cdot 2^{|X|+|\Sigma_c|}+ |\Sigma_o| \cdot 2^{2^{|X|+|\Sigma_c|}}$ transitions.
The complexity of removing all secret-revealing states to obtain G-BTS $T_0$ is linear in the size of $T_{total}$.
The complexity of removing all inconsistent states iteratively to obtain $T^*$ is quadratic in the size of $T_{total}$.
Once $T^*$ is constructed, we run Algorithm~2 to synthesize an IS-mapping $\Theta^*$, which is simply a depth-first search over the space of $T^*$ and the complexity is still linear in the size of $T^*$.
The resulting IS-mapping contains at most $2^{2^{|X|}}$ elements in its domain.
In order to execute the supervisor online, we use Algorithm~1 to decode IS-mapping $\Theta^*$.
To this end, the supervisor needs to store the IS-mapping   $\Theta^*$ computed offline, and during the online execution, record both the current state estimate $m$ and the current macro-state $\mathbf{m}$.
Note that $m$ contains at most $|X|$ states, while $\mathbf{m}$ contains at most $2^{|X|}$.
By making a new control decision upon the occurrence of a new observable event,  $m$ and $\mathbf{m}$ can be updated, respectively, in polynomial-time and exponential-time in the size of $G$.
Note that this online update transition can also be pre-computed offline and be stored as transition  rules together with the IS-mapping $\Theta^*$.
Overall, the entire complexity of the proposed synthesis approach is doubly-exponential in the size of the original plant, where the major complexity is spent for the offline computation.

\section{Properties of the Synthesis Procedure}\label{section6}
In this section, we formally prove the correctness of the synthesis procedure proposed in Section~\ref{section5}.
Note that in the formulation of  Problem~\ref{prob1}, supervisors make control decision based on the decision histories and can be non-IS-based in general.
However, our algorithm in Section~\ref{section5} solves a restricted version of Problem~1 by only considering IS-based supervisors as formulated in Problem~2.
Therefore, to show the correctness of the proposed synthesis procedure in the context of Problem~1, our arguments  consist of the following two steps:
\begin{enumerate}[(i)]
	\item
    first, we show that our solution to the IS-based synthesis problem, i.e., Problem~2, is sound and complete;
    \item
    then we show that   restricting Problem~1  to Problem~2 is without loss of generality, i.e., Problem~1 is solvable if and only if Problem~2 is solvable.
\end{enumerate}
Throughout this section, we denote by $S_{\Theta^*}$ the IS-based supervisor synthesized by Algorithm~2.

\subsection{Correctness of the IS-Based Synthesis Algorithm}

In this subsection, we show that Algorithm~2 indeed solves Problem~\ref{prob2}.
First, we show that Algorithm~2 is sound in the sense that the synthesized supervisor $S_{\Theta^*}$ is opacity-enforcing.

\begin{mythm}\label{thm2}
IS-based non-deterministic supervisor $S_{\Theta^*}:(\Gamma\Sigma_o)^*\to 2^\Gamma$  encoded from $\Theta^*$ enforces opacity.
\end{mythm}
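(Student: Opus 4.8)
The plan is to show that the supervisor $S_{\Theta^*}$ decoded from $\Theta^*$ never lets the intruder's state-estimate become a subset of $X_S$, which is exactly the opacity condition of Definition~\ref{Definition2}. The whole architecture of Section~\ref{section5} has been built so that this reduces to a purely structural property of $T^*$: namely that $T^*$ contains no secret-revealing $Z$-state. So the first thing I would do is connect the runtime behavior of $S_{\Theta^*}$ to the state-space of $T^*$.

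\textbf{Step 1: Every reachable information-state lies in $T^*$.} I would argue by induction on the length of a decision history $\rho \in \mathcal{O}(\mathcal{L}_e^o(S_{\Theta^*}/G))$ that the pair $(\hat{\mathcal{E}}_S(\rho), \hat{\mathcal{E}}_I(\rho|_\Sigma))$ has its macro-state component $\hat{\mathcal{E}}_I(\rho|_\Sigma)$ equal to a $Y$-state visited during the DFS of Algorithm~2, and correspondingly that the $Z$-state $\mathbf{m}^+$ reached after issuing the macro-control-decision is a $Z$-state of $T^*$. The base case is the initial state $\imath_0 = (\{x_0\}, \{\{x_0\}\})$, which equals $y_0$; since Algorithm~2 returns a solution only if $y_0$ is in $T^*$, this is covered. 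For the inductive step I would invoke Theorem~\ref{thm1}: the information-flow recursion $\mathbf{m} \xrightarrow{d_\Theta(\mathbf{m})} \mathbf{m}^+ \xrightarrow{\sigma} \mathbf{m}'$ that Algorithm~1 executes online coincides exactly with the $h_{Y\!Z}$ and $h_{ZY}$ transitions of the G-BTS, and since $\Theta^*$ is extracted from $T^*$ via Algorithm~2, the macro-control-decision $d$ chosen at each $Y$-state satisfies $d \in C_{T^*}(y)$, so $h_{Y\!Z}(y,d)$ and its observable successors all remain inside $T^*$.

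\textbf{Step 2: Translate ``no secret-revealing state'' into opacity.} Having established that every reachable $Z$-state $z$ is a state of $T^*$, I would use the fact that $T^*$ is obtained by deleting $Q_{reveal}$ and then pruning, so $z \notin Q_{reveal}$, i.e. $\bigcup M(z) \nsubseteq X_S$. By Corollary~\ref{corollary1} we have $M(z) = \mathcal{E}_I(s)$ where $s = \rho|_\Sigma$ is the observation reaching $z$, and by Proposition~\ref{pro1} we have $X_I(s) = \bigcup \mathcal{E}_I(s) = \bigcup M(z)$. Hence $X_I(s) \nsubseteq X_S$ for every $s \in P(\mathcal{L}(S_{\Theta^*}/G))$, which is precisely the statement that $S_{\Theta^*}/G$ is opaque.

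I expect the main obstacle to be Step~1 — specifically, being careful that the online recursion really stays inside $T^*$ rather than merely inside $T_{total}$. The subtlety is that $T^*$ is a consistent, pruned subsystem, and one must check that consistency guarantees the DFS never gets stuck: at every reachable $Y$-state some macro-control-decision is defined (because consistent $Y$-states have $C_{T^*}(y) \neq \emptyset$), and at every reachable $Z$-state every feasible observable event has a defined successor (the $Z$-state consistency condition), so the closed-loop system can always continue and never escapes the pruned structure. Once this invariant is nailed down, the remaining steps are bookkeeping that simply chains together the three previously-proved results (Theorem~\ref{thm1}, Corollary~\ref{corollary1}, Proposition~\ref{pro1}). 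I would keep the emphasis on that invariant and treat the concluding opacity deduction as a short application of the definitions.
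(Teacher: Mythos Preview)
Your proposal is correct and follows essentially the same approach as the paper's own proof: show that the induced $Z$-state $\mathbf{m}_n^+$ reached along any observable string lies in $T^*$ (the paper simply asserts this ``by construction'' via Algorithm~2, whereas you spell out the inductive invariant more carefully), then invoke Corollary~\ref{corollary1} together with Proposition~\ref{pro1} to conclude $X_I(s)=\bigcup M(\mathbf{m}_n^+)\not\subseteq X_S$ since $T^*$ contains no secret-revealing state. Your Step~1 discussion of consistency is a welcome elaboration of what the paper leaves implicit, but the overall argument is the same.
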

\begin{proof}
Let $s=\sigma_1\cdots\sigma_n\in P(\mathcal{L}(S_{\Theta^*}/G))$ be any observable string in    closed-loop system $S_{\Theta^*}/G$.
Let $\mathbf{m}^+_n$ be state  induced by $s$ and IS-mapping ${\Theta^*}$ according to Equation~\eqref{eq:induce}.
By Corollary~\ref{corollary1}, we know that $\bigcup M(\textbf{m}^+_n)=X_I(s)$.
According to  Algorithm~2, $\mathbf{m}^+_n$ is a reachable $Z$-state in $T^*$ by construction.
Furthermore,  since $T^*$ is obtained from $T_0$ where all $Z$-states in $Q_{reveal}$ are removed.
Therefore, we conclude that
\[
X_I(s)=\bigcup M(\textbf{m}^+)\not\subseteq X_S,
\]
which means that $\Theta^*$ enforces opacity.
\end{proof}

Note that Algorithim~2 returns ``no solution" when $y_0$ is not included in $T^*$.
Next, we show that Algorithim~2 is also complete in the sense that there is indeed no solution to Problem~2 when $y_0$ is removed by operator $F$ during the construction of $T^*$.

\begin{mythm}\label{thm3}
If there exists a non-deterministic IS-based supervisor that  enforces opacity, then  $y_0$ must be included in  $T^*$, i.e., Algorithm~2 will not return ``no solution" when a solution to Problem~2 exists.
\end{mythm}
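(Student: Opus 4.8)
The plan is to prove the contrapositive in a constructive, unified way: I will show that the maximal consistent secret-free structure $T^*$ captures \emph{every} IS-based opacity-enforcing supervisor, so that if $y_0\notin T^*$ then no IS-based supervisor can enforce opacity. The key object to introduce is, for an arbitrary IS-based opacity-enforcing supervisor $S_N$ (encoded by some IS-mapping $\Theta$), the G-BTS $T_{S_N}$ induced by its information flow: its reachable $Y$-states are exactly the macro-states $\mathbf{m}_k=\hat{\mathcal{E}}_I(s)$ and its reachable $Z$-states are the augmented macro-states $\mathbf{m}_k^+$ obtained from the sequence in Equation~\eqref{eq:induce}, with the unique macro-control-decision $d_\Theta(\mathbf{m})$ chosen at each $Y$-state. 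By Theorem~\ref{thm1} and Corollary~\ref{corollary1}, these reachable states carry the correct semantics, namely $\bigcup M(\mathbf{m}_k^+)=X_I(s)$ for every $s\in P(\mathcal{L}(S_N/G))$.

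First I would verify that $T_{S_N}$ is a \emph{subsystem} of $T_{total}$: every transition of $T_{S_N}$ satisfies the defining constraints of $h_{Y\!Z}$ (compatibility and $\mathbf{m}^+=\odot(d)$) and of $h_{ZY}$ ($\mathbf{m}=\widehat{N\!X}_\sigma(\mathbf{m}^+)$), which holds by the very definition of the induced sequence. Second, I would argue that every reachable $Z$-state of $T_{S_N}$ avoids $Q_{reveal}$: since $S_N$ enforces opacity, Definition~\ref{Definition2} gives $X_I(s)\not\subseteq X_S$ for all $s\in P(\mathcal{L}(S_N/G))$, and Corollary~\ref{corollary1} then yields $\bigcup M(\mathbf{m}_k^+)=X_I(s)\not\subseteq X_S$, so no reachable $Z$-state lies in $Q_{reveal}$. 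Third, and this is the crucial structural point, I would show that every reachable state of $T_{S_N}$ is \emph{consistent} in the sense of Section~\ref{section5}: each reachable $Y$-state has at least the macro-control-decision $d_\Theta(\mathbf{m})$ defined, so $C_{T_{S_N}}(\mathbf{m})\neq\emptyset$; and each reachable $Z$-state is consistent because for every observable event $\sigma$ that is genuinely feasible (some $(m,\gamma)$ with $N\!X_\sigma(m)\neq\emptyset$ and $\sigma\in\gamma$), the string extends inside $\mathcal{L}(S_N/G)$, so the successor $Y$-state $\widehat{N\!X}_\sigma(\mathbf{m}^+)$ is reachable and the transition is defined.

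The main obstacle I anticipate is the last step: connecting consistency of the induced structure $T_{S_N}$ to membership of $y_0$ in the \emph{fixpoint} $T^*$, which is defined by iteratively deleting inconsistent states from $T_{total}$ restricted away from $Q_{reveal}$. The clean way around this is a fixpoint/invariant argument: let $R$ be the set of states reachable in $T_{S_N}$. I would show by induction on the iteration index $k$ that $R\subseteq Q_Y\cup Q_Z$ survives every application of $F$, i.e. $R\subseteq F^k(T_0)$ for all $k$, hence $R\subseteq T^*$. The inductive step uses exactly the consistency established above: no state of $R$ is ever deleted by $F$ because, within $R$, every $Y$-state retains a defined macro-control-decision (the one dictated by $\Theta$, whose successors are again in $R$) and every $Z$-state has all its feasible observable transitions leading back into $R$. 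Since $y_0=\{\{x_0\}\}\in R$ is the initial $Y$-state of $T_{S_N}$, it follows that $y_0\in T^*$, which is precisely the claim. The delicate bookkeeping is ensuring the deletion of a state outside $R$ can never cascade to remove a state inside $R$, which is guaranteed because $R$ is closed under the $T_{S_N}$-transitions and every consistency requirement for states in $R$ is witnessed entirely by other states in $R$.
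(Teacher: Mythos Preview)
Your proposal is correct and follows essentially the same approach as the paper: construct the G-BTS induced by the given IS-based supervisor, observe that its $Z$-states avoid $Q_{reveal}$ (via Corollary~\ref{corollary1} and opacity), note that it is consistent by construction, and conclude that it survives every application of $F$ and hence lies inside $T^*$, so $y_0\in T^*$. The paper's proof is terser---it simply asserts that since the induced structure is consistent and contained in $T_0$, none of its states can be removed by iterating $F$---whereas you spell out the fixpoint/invariant argument explicitly; but the underlying idea is identical.
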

\begin{proof}
Suppose that there exists a reachability-closed IS-mapping $\Theta: I\to 2^\Gamma$ such that the encoded non-deterministic supervisor $S_{\Theta}$ enforces opacity.
We construct a consistent G-BTS $T=(Q_Y,Q_Z,h_{Y\!Z},h_{ZY},\Sigma_o,D,y_0)$ as follows:
$Q_Y=\{\mathbf{m}:(m,\mathbf{m})\in\mathcal{I}_{S_\Theta}\}$ and for any $y\in Q_Y$,
$d= \{ (m, \Theta(m,y) ) : m\in y    \} $ is the unique macro-control-decision defined at $y$ and $Q_Z=\{ \odot(d) :  \exists y\in Q_Y\text{ s.t. } h_{YZ}(y,d)! \}$.
Since $S_{\Theta}$ enforces opacity, we have
$\forall s\in P(\mathcal{L}(S_\Theta/G)): X_I(s)\not\subseteq X_S$.
Let $\mathbf{m}^+_n$ be state  induced by $s$ and IS-mapping ${\Theta}$ according to Equation~\eqref{eq:induce}.
By the construction of $T$, we have $\mathbf{m}^+_n \in Q_Z  $.
By Corollary~\ref{corollary1}, we know that $\bigcup M(\textbf{m}^+_n)=X_I(s)$.
Therefore, we have $Q_Z \cap  Q_{reveal}=\emptyset$.
Since $T$ is included in $T_0$ and $T$ itself is consistent,
no states in $T$ can be removed when iteratively applying operator $F$, which means that all states in $T$ are also included in $T^*$.
Therefore, the initial $Y$-state $y_0$ is included in $T^*$ and Algorithm~2 will not return ``no solution".
\end{proof}

\subsection{From Non-IS-Based Supervisors to IS-Based Supervisors}
So far, we have shown that Algorithm~2 correctly solves Problem~2, which is a restrictive version of Problem~1.
Clearly, Algorithm~2 is also sound for Problem~1 because an IS-based solution is also a solution to Problem~1.
Then it remains to show the completeness of Algorithm~2 in terms of Problem~1.
To this end, it suffices to show that Problem~2 always has a solution when Problem~1 has one.
Here, we provide a constructive procedure that always construct an IS-based opacity-enforcing supervisor that solves Problem~2  when a non-IS-based one that solves Problem~1 exists.

Suppose that there exists a (possibly non-IS-based) non-deterministic supervisor $S_N: (\Gamma \Sigma_o)^* \to 2^\Gamma$ that enforces opacity.
We construct an IS-mapping $\Theta$ according to Algorithm~3.
The idea is similar to the information-flow analysis for IS-mapping, which expends the information-state space from the initial information state.
We still use $y$ to denote state-estimates immediately after an observable event and use $z$ to denote  state-estimates with the unobserable tail included.
However, since the supervisor needs not to be IS-based, simply remembering the current information-state is not sufficient and we also need to remember the history leading to each state estimate. Therefore, for each micro-state $m_i$ in a  $Y$-state, we add an additional information $\rho_i$ to track how this micro-state is visited.
Note that for each micro-state $m$ in a $Y$-state, the decision history may  not be   unique since there there may have multiple $\rho$ associated with the same $m$.
Similarly, each augmented micro-state in a $Z$-state is also attached with a decision history information.
Then procedure DoDSF implements a depth-first search to generate the domain of the IS-mapping.
Note that, since $S_N$ is not IS-based in general, it may take different actions for different histories visiting the same information-state.
Our approach is to fix the control decision for each information-state as the union of the decisions for all its visits; the constructed mapping is, therefore, forced to be IS-based.
Algorithm~3 clearly terminates in a finite number of states since it will stop when all possible macro-states are visited.

The following result shows that Algorithm~3   indeed converts a non-IS-based opacity-enforcing supervisor into an IS-based opacity-enforcing supervisor.

\begin{algorithm}[htbp]
	\SetKwData{Left}{left}
	\SetKwData{Up}{up}
	\SetKwFunction{FindCompress}{FindCompress}
	\SetKwInOut{Input}{input}
	\SetKwInOut{Output}{output}
	\Input{$S_N$}
	\Output{$\Theta$}
	\Indp
	\BlankLine
	\nl $\rho \gets\epsilon,m \gets \{x_0\} , \textsc{Visited}\gets \{  \{m\}  \}$ \;
	\nl $y\gets \{ (m, \epsilon ) \}$ \;
	\nl {DoDFS}($ y , \textsc{Visited}$)\;
	\nl
	\textbf{return} {$\Theta$}\;
	\BlankLine
	\Indm
	{\bf procedure} DoDFS($ y, \textsc{Visited} $)\;
	\Indp
	\nl
	$z\gets\emptyset$\;
	\nl
	Suppose $y=\{ (m_1, \rho_1 ),\dots, (m_k ,\rho_k ) \}$ \;
	\nl
	$ \mathbf{m}\gets \{m_1,\dots,m_k  \}  $ \;
	\nl
	\For{$i=1,\dots,k $}
	{
		\nl
		$\textsc{Dom}(\Theta)\gets\textsc{Dom}(\Theta) \cup \{(m_i , \mathbf{m}   )\}  $\;
		\nl
		$\Theta(m_i , \mathbf{m}  )\gets \bigcup\{ S_N(\rho):(m_i,\rho)\in Y\}    $ \;
		\nl
		\For{$\gamma\in S_N(\rho_i)$}
		{
			\nl
			$   z\gets z\cup \{   (U\!R_{\gamma}( m_i), \rho_i  \gamma  ,\gamma  )    \}  $\;
		}
	}
	\nl
	Suppose $z=\{ (m_1,\rho_1,\gamma_1),\dots, (m_p,\rho_p,\gamma_p) \}$ \;
	\nl
	\For{$\sigma\in\Sigma_o$}
	{
		\nl
		$y'\gets\emptyset$\;
		\nl
		\For{$i=1,\dots,p$}
		{
		\nl \If{$\sigma\in\gamma_i$}{
			\nl
			$ y'\gets   y'   \cup  \{  (N\!X_{\sigma}( m_i ),  \rho_i \sigma  )  \}  $\;}
		}
		\nl \If{$\{m:  (m,\rho)\in y'  \} \not\in \textsc{Visited}$}
		{
			\nl
			\textsc{Visited}$\gets$ \textsc{Visited}$\cup\{      \{m:  (m,\rho)\in y'  \}     \}   $\;
			\nl
			DoDFS($y'$)\;
		}
	}
	\BlankLine
	\Indm
	\caption{Construction of IS-Mapping $\Theta$ from $S_N$\label{constructM}}
\end{algorithm}

\begin{mythm}\label{thm4}
Let $S_N: (\Gamma\Sigma_o)^*\to 2^\Gamma$ be a non-deterministic supervisor enforcing opacity and $\Theta: I\to 2^\Gamma$ be the partial IS-mapping constructed by Algorithm~3. Then IS-based supervisor $S_\Theta$ also enforces opacity.
\end{mythm}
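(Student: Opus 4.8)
The plan is to reduce opacity of $S_\Theta$ to opacity of $S_N$ by showing that every intruder state-estimate reachable under $S_\Theta$ is already realized by $S_N$. Fix any $s\in P(\mathcal{L}(S_\Theta/G))$; by Corollary~\ref{corollary1}, $X_I(s)=\bigcup M(\mathbf{m}^+)$, where $\mathbf{m}^+$ is the augmented macro-state induced by $s$ and $\Theta$ through the information-flow of Equation~\eqref{eq:induce}. Hence it suffices to prove that every reachable $\mathbf{m}^+$ satisfies $\bigcup M(\mathbf{m}^+)\not\subseteq X_S$. I would establish this by identifying the augmented macro-states visited by $S_\Theta$'s information-flow with the $z$-states constructed by Algorithm~3, and then showing that each such $z$-state coincides with an intruder estimate of $S_N$.

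The central lemma, proved by induction on $|w|$, is that for every $Y$-state $y$ reached by the depth-first search of Algorithm~3 along an observation $w$, the histories attached to its micro-states are exactly all $S_N$-histories with that projection, that is,
\[
\{\rho:(m,\rho)\in y\}=\{\rho\in\mathcal{O}(\mathcal{L}_e^o(S_N/G)):\rho|_\Sigma=w\}.
\]
The base case is immediate, and the inductive step holds because the successor $z$ (lines 8--12) enumerates every $\gamma\in S_N(\rho)$ for every attached $\rho$ while the observable update (lines 14--18) appends every feasible and enabled $\sigma$; moreover the full history of each successor is computed before the \textsc{Visited} test of line~19, so the macro-state pruning only suppresses re-exploration and never truncates the history of a state on which $\Theta$ is defined. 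Granting the lemma, the macro-state of $y$ is $\mathbf{m}=\hat{\mathcal{E}}_I(w)$, line~10 sets $\Theta(m,\mathbf{m})=\bigcup\{S_N(\rho):(m,\rho)\in y\}$, and collapsing the aggregated unobservable reach yields
\[
\bigcup M(\odot(d_\Theta(\mathbf{m})))=\bigcup\{\mathcal{E}_S(\rho\gamma):\rho|_\Sigma=w,\ \gamma\in S_N(\rho)\}=X_I(w),
\]
the last step being Proposition~\ref{pro1}. Since $S_N$ enforces opacity, $X_I(w)\not\subseteq X_S$, so every $z$-state of Algorithm~3 is non-revealing.

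It then remains to connect the runtime of $S_\Theta$ to this static structure. I would check that the aggregated successor of Algorithm~3 is precisely $\odot(d_\Theta(\mathbf{m}))$: for a fixed micro-state $m$, letting $\gamma$ range over $\Theta(m,\mathbf{m})=\bigcup\{S_N(\rho):(m,\rho)\in y\}$ reproduces exactly the augmented micro-states that all occurrences of $m$ in $y$ contribute to $z$. Thus the $Y\!\to\!Z$ and $Z\!\to\!Y$ transitions of Algorithm~3 agree with the operators $d_\Theta$ and $\widehat{N\!X}_\sigma$ of Equation~\eqref{eq:induce}, so by Theorem~\ref{thm1} and a routine induction the macro-states reachable by $S_\Theta$ are exactly those explored by Algorithm~3; in particular $\Theta$ is reachability-closed and $S_\Theta$ is well-defined. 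Combining everything, for any $s\in P(\mathcal{L}(S_\Theta/G))$ the induced $\mathbf{m}^+$ is a $z$-state of Algorithm~3 whose underlying macro-state was first reached along some observation $w$, so $X_I(s)=\bigcup M(\mathbf{m}^+)=X_I(w)\not\subseteq X_S$, and therefore $S_\Theta$ enforces opacity.

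The hard part will be the interplay between the union-of-decisions definition of $\Theta$ in line~10 and the macro-state pruning by \textsc{Visited}: because $S_\Theta$ is forced to be IS-based whereas $S_N$ need not be, the macro-state underlying $\mathbf{m}^+$ may be reached under $S_\Theta$ along an observation $s$ different from the observation $w$ at which Algorithm~3 defined $\Theta$ there, and one must rule out that the union enables a genuinely new, secret-revealing behavior. The resolution is that $\bigcup M(\odot(d_\Theta(\mathbf{m})))$ depends only on $\mathbf{m}$ and $\Theta$, not on the path taken, and the completeness lemma pins it to the $S_N$-estimate $X_I(w)$ of the first-visit observation; the union is therefore never ``too large,'' as it aggregates exactly the decisions $S_N$ itself issues across all histories collapsing to $\mathbf{m}$, thereby reproducing $S_N$'s intruder uncertainty rather than enlarging it.
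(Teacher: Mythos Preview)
Your proposal is correct and follows essentially the same route as the paper: both arguments hinge on the inductive characterization of the $Y$-states built by Algorithm~3 as exactly $\{(\hat{\mathcal{E}}_S(\rho),\rho):\rho\in\mathcal{O}(\mathcal{L}_e^o(S_N/G)),\ \rho|_\Sigma=w\}$ for the first-visit observation $w$, from which $\bigcup M(\odot(d_\Theta(\mathbf{m})))=\bigcup\mathcal{E}_I(w)=X_I(w)\not\subseteq X_S$ follows by opacity of $S_N$. Your write-up is, if anything, a bit more explicit than the paper's about why the runtime information-flow of $S_\Theta$ coincides with Algorithm~3's static exploration (hence reachability-closedness) and about why the union in line~10 cannot introduce new secret-revealing behavior; these are exactly the points the paper handles tersely.
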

\begin{proof}
First, by construction,
for each macro-state ($Y$-state without the extended strings components) visited by IS-mapping $\Theta$, i.e., $\mathbf{m}=\{m_1,\ldots, m_k\}\in \textsc{Visited}$, IS-mapping $\Theta$ defines a non-deterministic control decision for each micro-state $m_i\in\mathbf{m}$ (lines 9-10).
Also, for each $Z$-state reached by $S_\Theta$, every possible observable events are defined (line 14).
Therefore, for every information-state $(m,\textbf{m})\in\textsc{Reach}_\Theta((\{x_0\},\{\{x_0\}\}))$, $\Theta(m,\textbf{m})$ is always well-defined, i.e.,  IS-mapping $\Theta$ is reachability-closed.
Therefore, its decoded IS-based supervisor $S_\Theta$ is well-defined and we have
\[
\mathcal{I}_{S_\Theta}=\textsc{Reach}_{\Theta}(\imath_0) = \textsc{Dom}(\Theta),
\]
where  $\imath_0=( \{x_0\},\{ \{x_0\}  \} )$.

By Corollary \ref{corollary1} and Proposition \ref{pro1}, 	
to show that  $S_\Theta$  enforces opacity, it suffices to show that
\[
\forall (m,\mathbf{m})\in \textsc{Dom}(\Theta):  \bigcup M(  \odot( d_{\Theta}( \mathbf{m} ) )  ) \not\subseteq X_S.
\]
To this end, we consider how $\mathbf{m}$ is added.
Suppose $y_0y_1\dots y_n$ is the sequence of $Y$-states  in the depth-first search
such that $y_n$ contributes $\mathbf{m}$ to $\textsc{Visited}$,  and let $s=\sigma_1\dots\sigma_n$ be the observable events along this sequence.
More clearly, suppose that $y_n=\{ (m_1, \rho_1 ),\dots,  (m_k , \rho_k ) \}$ and we have
$\{m_1,\dots,m_k\}=\mathbf{m}$.
We claim that for $y_n$, we have
\[
y_n= \{   (\hat{\mathcal{E}}_S(\rho),\rho):   \rho\in  \mathcal{O}( \mathcal{L}_e^o(S_N/G)  ) ,  \rho|_{\Sigma}=s    \}
\]
This claim can be seen inductively by the length of $s$.
For $|s|=0$, we have
$y_0=\{ (\{x_0\}, \epsilon ) \}$, where
 $\epsilon $ is the unique string in $\mathcal{O}( \mathcal{L}_e^o(S_N/G)  ) $ whose projection is also $\epsilon$
and $\hat{\mathcal{E}}_S(\rho)=\{x_0\}$.
Assume that this claim holds for $|s|=k$, then  for the case of $s\sigma_{k+1}$, according to lines 11-12  and  lines 16-17, we have
\begin{align}
y_{k+1}
=&
\left\{
\begin{array}{c c}
( N\!X_{\sigma_{k+1}}( U\!R_\gamma(  m  )    )  , \rho\gamma\sigma_{k+1}    )      : \\
  (m,\rho )\in y_k,  \gamma\in S_N(\rho), \sigma_{k+1}\in \gamma
\end{array}
\right\}   \nonumber\\
=&
\left\{
\begin{array}{c c}
(\hat{\mathcal{E}}_S(\rho'),\rho'): \\
 \rho'\in  \mathcal{O}( \mathcal{L}_e^o(S_N/G)  ) ,  \rho'|_{\Sigma}=\sigma_1\dots\sigma_k\sigma_{k+1}
\end{array}
\right\}
\nonumber
\end{align}
Now, still for the same $\mathbf{m}$ and string $s$ leading to it. We have
\begin{align}
 &M( \odot( d_{\Theta}( \mathbf{m} ) ) )   \nonumber \\
=&\{  {U\!R}_\gamma(m)\in 2^X :    (m,\Gamma)\in d_{\Theta}( \mathbf{m} ),\gamma\in\Gamma  \} \nonumber \\
=& \{  {U\!R}_\gamma( \hat{\mathcal{E}}_S(\rho)  ) \in 2^X : \rho\!\in \! \mathcal{O}( \mathcal{L}_e^o(S_N/G)  ) ,  \rho|_{\Sigma}\!=\!s,\gamma\!\in\! S_N(\rho)     \}  \nonumber \\
=&\{ \mathcal{E}_S( \rho )\in 2^X :   \rho\in \mathcal{O}(\mathcal{L}_e^d(S_N/G)) \text{ s.t. }   \rho|_{\Sigma}=s   \} \nonumber \\
=& \mathcal{E}_I(s) \nonumber
\end{align}
Since $S_N$ enforces opacity, we have $\bigcup\mathcal{E}_I(s)=X_I(s) \not\subseteq X_S$, which means that
$\bigcup M(  \odot( d_{\Theta}( \mathbf{m} ))  )\not\subseteq X_S$. This completes the proof.
\end{proof}

By combining Theorems~\ref{thm2}-\ref{thm4}, we have the following result immediately that finally establishes the correctness of the synthesis procedure.
\begin{mycol}
Algorithm~2 also correctly solves Problem~1, i.e., it is both sound and complete.
\end{mycol}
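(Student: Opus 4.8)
The plan is to derive the corollary purely by chaining Theorems~\ref{thm2}--\ref{thm4}, without any new technical work; indeed the result is phrased as a corollary precisely because all the substance already resides in those three theorems. The argument splits cleanly into the soundness and the completeness of Algorithm~2 with respect to Problem~\ref{prob1}.

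For soundness, I would observe that whenever Algorithm~2 returns an IS-mapping $\Theta^*$, Theorem~\ref{thm2} guarantees that the decoded supervisor $S_{\Theta^*}$ enforces opacity, i.e., $S_{\Theta^*}/G$ is opaque w.r.t.\ $X_S$ and $\Sigma_o$. Since every IS-based supervisor is in particular a non-deterministic supervisor of the form $(\Gamma\Sigma_o)^*\to 2^\Gamma$ as required by Problem~\ref{prob1}, $S_{\Theta^*}$ is automatically a valid solution to Problem~\ref{prob1}. Hence any output of Algorithm~2 solves Problem~\ref{prob1}, which establishes soundness.

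For completeness, I would argue at the level of the existence of solutions. Suppose Problem~\ref{prob1} admits a solution, i.e., there is a (possibly non-IS-based) non-deterministic supervisor $S_N$ with $S_N/G$ opaque. By Theorem~\ref{thm4}, Algorithm~3 transforms $S_N$ into a reachability-closed IS-mapping $\Theta$ whose decoded IS-based supervisor $S_\Theta$ also enforces opacity; hence Problem~\ref{prob2} is solvable. Invoking Theorem~\ref{thm3}, the solvability of Problem~\ref{prob2} forces the initial $Y$-state $y_0$ to survive in $T^*$, so Algorithm~2 does not return ``no solution'' but instead returns some IS-mapping $\Theta^*$, which by Theorem~\ref{thm2} yields an opacity-enforcing $S_{\Theta^*}$ and thus a solution to Problem~\ref{prob1}. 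This closes the loop: Problem~\ref{prob1} is solvable if and only if Algorithm~2 returns a correct solution.

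The only point requiring care is the logical orientation of the two inclusions between the solution spaces of Problems~\ref{prob1} and~\ref{prob2}. The \emph{easy} inclusion is that every IS-based solution is a fortiori a Problem~\ref{prob1} solution, which drives soundness; the \emph{hard} inclusion---that solvability of the unrestricted Problem~\ref{prob1} entails solvability of the restricted Problem~\ref{prob2}---is exactly the content of Theorem~\ref{thm4} and is the sole nontrivial ingredient. I expect no separate obstacle at the level of the corollary itself beyond correctly marshalling these implications in the right direction.
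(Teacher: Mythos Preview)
Your proposal is correct and matches the paper's own argument: the corollary is stated immediately after Theorems~\ref{thm2}--\ref{thm4} with the remark that combining them yields the result, and your decomposition into soundness via Theorem~\ref{thm2} plus completeness via Theorem~\ref{thm4} followed by Theorem~\ref{thm3} is exactly that combination spelled out.
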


\section{Conclusion}\label{section7}

In this paper, we proposed to use non-deterministic control mechanism to enforce opacity. The essence is to leverage the non-deterministic mechanism to enhance the plausible deniability of the system.
To this end, we formally defined the non-deterministic supervisor and formulated the corresponding opacity enforcement problem. Effective approach was provided to synthesize a non-deterministic opacity-enforcing supervisor based on both the information of the supervisor and the information of the intruder. We showed that the proposed algorithm is both sound and complete in the sense that it will correctly return a non-deterministic opacity-enforcing supervisor when one exists.

Although we show that non-deterministic supervisors are strictly more powerful than  deterministic ones, the synthesis complexity is doubly-exponential in the size of the plant,
which is higher than the single-exponential complexity for the deterministic case.
Intuitively, this complexity is paid because we should not only estimate all possible states of the system from the supervisor's point of view,
but also need to estimate the supervisor's estimates from the intruder's point of view.
Recently, some new efficient  approaches, such as abstraction-based approach \cite{noori2018incremental,zhang2019opacity,hou2019abstraction,liu2020notion,yin2020approximate,liu2020verification} and compositional approach \cite{noori2018compositional,mohajerani2019transforming,mohajerani2020compositional},
have been proposed to reduce the  computational complexity in the verification and synthesis of opacity.
In the future work, we also would like to leverage these techniques to further mitigate the complexity of the proposed synthesis algorithm.

\bibliographystyle{plain}
\bibliography{des}

\begin{thebibliography}{10}

\bibitem{an2020opacity}
L.~An and G.-H. Yang.
\newblock Opacity enforcement for confidential robust control in linear
  cyber-physical systems.
\newblock {\em IEEE Transactions on Automatic Control}, 65(3):1234--1241, 2020.

\bibitem{badouel2007concurrent}
E.~Badouel, M.~Bednarczyk, A.~Borzyszkowski, B.~Caillaud, and P.~Darondeau.
\newblock Concurrent secrets.
\newblock {\em Discrete Event Dynamic Systems}, 17(4):425--446, 2007.

\bibitem{barcelos2018enforcing}
R.J. Barcelos and J.C. Basilio.
\newblock Enforcing current-state opacity through shuffle in event
  observations.
\newblock {\em IFAC-PapersOnLine}, 51(7):100--105, 2018.

\bibitem{behinaein2019optimal}
B.~Behinaein, F.~Lin, and K.~Rudie.
\newblock Optimal information release for mixed opacity in discrete-event
  systems.
\newblock {\em IEEE Transactions on Automation Science and Engineering},
  16(4):1960--1970, 2019.

\bibitem{bryans2008opacity}
J.W. Bryans, M.~Koutny, L.~Mazar{\'e}, and P.~Ryan.
\newblock Opacity generalised to transition systems.
\newblock {\em Internationa Journal of Information Security}, 7(6):421--435,
  2008.

\bibitem{Lbook}
C.G. Cassandras and S.~Lafortune.
\newblock {\em Introduction to Discrete Event Systems}.
\newblock Springer, 2nd edition, 2008.

\bibitem{cassez2012synthesis}
F.~Cassez, J.~Dubreil, and H.~Marchand.
\newblock Synthesis of opaque systems with static and dynamic masks.
\newblock {\em Formal Methods in System Design}, 40(1):88--115, 2012.

\bibitem{chen2017quantification}
J.~Chen, M.~Ibrahim, and R.~Kumar.
\newblock Quantification of secrecy in partially observed stochastic discrete
  event systems.
\newblock {\em IEEE Trans.\ Automation Science and Engineering},
  14(1):185--195, 2017.

\bibitem{darondeau2014enforcing}
P.~Darondeau, H.~Marchand, and L.~Ricker.
\newblock Enforcing opacity of regular predicates on modal transition systems.
\newblock {\em Discrete Event Dynamic Systems}, 25(1-2):251--270, 2014.

\bibitem{dubreil2010supervisory}
J.~Dubreil, P.~Darondeau, and H.~Marchand.
\newblock Supervisory control for opacity.
\newblock {\em IEEE Trans.\ Automatic Control}, 55(5):1089--1100, 2010.

\bibitem{fabian1996non}
M.~Fabian and B.~Lennartson.
\newblock On non-deterministic supervisory control.
\newblock In {\em 35th IEEE Conference on Decision and Control}, pages
  2213--2218, 1996.

\bibitem{falcone2015enforcement}
Y.~Falcone and H.~Marchand.
\newblock Enforcement and validation (at runtime) of various notions of
  opacity.
\newblock {\em Discrete Event Dynamic Systems}, 25(4):531--570, 2015.

\bibitem{farhat2020control}
H.~Farhat.
\newblock Control of nondeterministic systems for bisimulation equivalence
  under partial information.
\newblock {\em IEEE Transactions on Automatic Control}, 2020.

\bibitem{gradel2002automata}
E.~Gradel and W.~Thomas.
\newblock {\em Automata, logics, and infinite games: a guide to current
  research}, volume 2500.
\newblock Springer Science \& Business Media, 2002.

\bibitem{hadjicostis2020estimation}
C.N. Hadjicostis.
\newblock {\em Estimation and Inference in Discrete Event Systems}.
\newblock Springer, 2020.

\bibitem{he2021performance}
Z.~He, Z.~Ma, and W.~Tang.
\newblock Performance safety enforcement in strongly connected timed event
  graphs.
\newblock {\em Automatica}, 128:109605, 2021.

\bibitem{hou2019abstraction}
J.~Hou, X.~Yin, S.~Li, and M.~Zamani.
\newblock Abstraction-based synthesis of opacity-enforcing controllers using
  alternating simulation relations.
\newblock In {\em 58th IEEE Conference on Decision and Control (CDC)}, pages
  7653--7658. IEEE, 2019.

\bibitem{inan1994nondeterministic}
K.~Inan.
\newblock Nondeterministic supervision under partial observations.
\newblock In {\em 11th Int.\ Conf.\ Analysis and Optim.\ Systems Discrete Event
  Systems}, pages 39--48, 1994.

\bibitem{jacob2016overview}
R.~Jacob, J.-J. Lesage, and J.-M. Faure.
\newblock Overview of discrete event systems opacity: Models, validation, and
  quantification.
\newblock {\em Annual Rev.\ Control}, 41:135--146, 2016.

\bibitem{ji2018enforcement}
Y.~Ji, Y.-C. Wu, and S.~Lafortune.
\newblock Enforcement of opacity by public and private insertion functions.
\newblock {\em Automatica}, 93:369--378, 2018.

\bibitem{keroglou2016probabilistic}
C.~Keroglou and C.N. Hadjicostis.
\newblock Probabilistic system opacity in discrete event systems.
\newblock {\em Discrete Event Dynamic Systems}, pages 1--26, 2017.

\bibitem{keroglou2018insertion}
C.~Keroglou, L.~Ricker, and S.~Lafortune.
\newblock Insertion functions with memory for opacity enforcement.
\newblock {\em IFAC-PapersOnLine}, 51(7):394--399, 2018.

\bibitem{kumar2005polynomial}
R.~Kumar, S.~Jiang, C.~Zhou, and W.~Qiu.
\newblock Polynomial synthesis of supervisor for partially observed
  discrete-event systems by allowing nondeterminism in control.
\newblock {\em IEEE Trans.\ Automatic Control}, 50(4):463--475, 2005.

\bibitem{lafortune2018history}
S.~Lafortune, F.~Lin, and C.N. Hadjicostis.
\newblock On the history of diagnosability and opacity in discrete event
  systems.
\newblock {\em Annual Reviews in Control}, 45:257--266, 2018.

\bibitem{lefebvre2019exposure}
D.~Lefebvre and C.N. Hadjicostis.
\newblock Exposure time as a measure of opacity in timed discrete event
  systems.
\newblock In {\em 18th European Control Conference (ECC)}, pages 1740--1745,
  2019.

\bibitem{lin2011opacity}
F.~Lin.
\newblock Opacity of discrete event systems and its applications.
\newblock {\em Automatica}, 47(3):496--503, 2011.

\bibitem{lin2020information}
F.~Lin, W.~Chen, W.~Wang, and F.~Wang.
\newblock Information control in networked discrete event systems and its
  application to battery management systems.
\newblock {\em Discrete Event Dynamic Systems}, pages 1--26, 2020.

\bibitem{liu2020k}
R.~Liu, L.~Mei, and J.~Lu.
\newblock {$K$}-memory-embedded insertion mechanism for opacity enforcement.
\newblock {\em Systems \& Control Letters}, 145:104785, 2020.

\bibitem{liu2020notion}
S.~Liu, X.~Yin, and M.~Zamani.
\newblock On a notion of approximate opacity for discrete-time stochastic
  control systems.
\newblock In {\em American Control Conference (ACC)}, pages 5413--5418, 2020.

\bibitem{liu2020verification}
S.~Liu and M.~Zamani.
\newblock Verification of approximate opacity via barrier certificates.
\newblock {\em IEEE Control Systems Letters}, 5(4):1369--1374, 2020.

\bibitem{mazare2004using}
L.~Mazar{\'e}.
\newblock Using unification for opacity properties.
\newblock In {\em Workshop on Issues in the Theory of Security}, volume~4,
  pages 165--176, 2004.

\bibitem{mohajerani2020compositional}
S.~Mohajerani, Y.~Ji, and S.~Lafortune.
\newblock Compositional and abstraction-based approach for synthesis of edit
  functions for opacity enforcement.
\newblock {\em IEEE Transactions on Automatic Control}, 65(8):3349--3364, 2020.

\bibitem{mohajerani2019transforming}
S.~Mohajerani and S.~Lafortune.
\newblock Transforming opacity verification to nonblocking verification in
  modular systems.
\newblock {\em IEEE Transactions on Automatic Control}, 65(4):1739--1746, 2020.

\bibitem{noori2018incremental}
M.~Noori-Hosseini, B.~Lennartson, and C.~Hadjicostis.
\newblock Incremental observer reduction applied to opacity verification and
  synthesis.
\newblock arXiv:1812.08083, 2018.

\bibitem{noori2018compositional}
M.~Noori-Hosseini, B.~Lennartson, and C.N. Hadjicostis.
\newblock Compositional visible bisimulation abstraction applied to opacity
  verification.
\newblock {\em IFAC-PapersOnLine}, 51(7):434--441, 2018.

\bibitem{ramasubramanian2020notions}
B.~Ramasubramanian, W.R. Cleaveland, and S.~Marcus.
\newblock Notions of centralized and decentralized opacity in linear systems.
\newblock {\em IEEE Transactions on Automatic Control}, 265(4):1442--1455,
  2020.

\bibitem{saadaoui2020current}
I.~Saadaoui, Z.~Li, and N.~Wu.
\newblock Current-state opacity modelling and verification in partially
  observed petri nets.
\newblock {\em Automatica}, 116:108907, 2020.

\bibitem{saboori2011opacity}
A.~Saboori and C.N. Hadjicostis.
\newblock Opacity-enforcing supervisory strategies via state estimator
  constructions.
\newblock {\em IEEE Trans.\ Automatic Control}, 57(5):1155--1165, 2011.

\bibitem{saboori2013verification}
A.~Saboori and C.N. Hadjicostis.
\newblock Verification of initial-state opacity in security applications of
  discrete event systems.
\newblock {\em Information Sciences}, 246:115--132, 2013.

\bibitem{takai2019bisimilarity}
S.~Takai.
\newblock Bisimilarity enforcing supervisory control of nondeterministic
  discrete event systems with nondeterministic specifications.
\newblock {\em Automatica}, 108:108470, 2019.

\bibitem{takai2020synthesis}
S.~Takai.
\newblock Synthesis of maximally permissive supervisors for nondeterministic
  discrete event systems with nondeterministic specifications.
\newblock {\em IEEE Transactions on Automatic Control}, 2020.

\bibitem{takai2008formula}
S.~Takai and Y.~Oka.
\newblock A formula for the supremal controllable and opaque sublanguage
  arising in supervisory control.
\newblock {\em SICE J.\ Control, Measu.\ \& Syst.\ Integration}, 1(4):307--311,
  2008.

\bibitem{tong2017verification}
Y.~Tong, Z.~Li, C.~Seatzu, and A.~Giua.
\newblock Verification of state-based opacity using {P}etri nets.
\newblock {\em IEEE Trans.\ Automatic Control}, 62(6):2823--2837, 2017.

\bibitem{tong2018current}
Y.~Tong, Z.~Li, C.~Seatzu, and A.~Giua.
\newblock Current-state opacity enforcement in discrete event systems under
  incomparable observations.
\newblock {\em Discrete Event Dynamic Systems}, 28(2):161--182, 2018.

\bibitem{zhan2018}
L.~Wang, N.~Zhan, and J.~An.
\newblock The opacity of real-time automata.
\newblock {\em IEEE Trans.\ Computer-Aided Design of Integrated Circuits and
  Systems}, 37(11):2845--2856, 2018.

\bibitem{wu2018synthesis}
B.~Wu, J.~Dai, and H.~Lin.
\newblock Synthesis of insertion functions to enforce decentralized and joint
  opacity properties of discrete-event systems.
\newblock In {\em American Control Conference (ACC)}, pages 3026--3031. IEEE,
  2018.

\bibitem{wu2018privacy}
B.~Wu and H.~Lin.
\newblock Privacy verification and enforcement via belief abstraction.
\newblock {\em IEEE Control Systems Letters}, 2(4):815--820, 2018.

\bibitem{xie2020ifac}
Y.~Xie, X.~Yin, and S.~Li.
\newblock Opacity enforcing supervisory control using non-deterministic
  supervisors.
\newblock In {\em 21st IFAC World Congress}, 2020.

\bibitem{yao2020initial}
Y.~Yao, Y.~Tong, and H.~Lan.
\newblock Initial-state estimation of multi-channel networked discrete event
  systems.
\newblock {\em IEEE Control Systems Letters}, 2020.

\bibitem{yin2016uniform}
X.~Yin and S.~Lafortune.
\newblock A uniform approach for synthesizing property-enforcing supervisors
  for partially-observed discrete-event systems.
\newblock {\em IEEE Transactions on Automatic Control}, 61(8):2140--2154, 2016.

\bibitem{yin2017new}
X.~Yin and S.~Lafortune.
\newblock A new approach for the verification of infinite-step and k-step
  opacity using two-way observers.
\newblock {\em Automatica}, 80:162--171, 2017.

\bibitem{yin2019infinite}
X.~Yin, Z.~Li, W.~Wang, and S.~Li.
\newblock Infinite-step opacity and {$K$}-step opacity of stochastic
  discrete-event systems.
\newblock {\em Automatica}, 99:266--274, 2019.

\bibitem{yin2020approximate}
X.~Yin, M.~Zamani, and S.~Liu.
\newblock On approximate opacity of cyber-physical system.
\newblock {\em IEEE Transactions on Automatic Control}, 66(4):1630--1645, 2021.

\bibitem{zhang2015max}
B.~Zhang, S.~Shu, and F.~Lin.
\newblock Maximum information release while ensuring opacity in discrete event
  systems.
\newblock {\em IEEE Trans.\ Automation Science and Engineering},
  12(4):1067--1079, 2015.

\bibitem{zhang2019opacity}
K.~Zhang, X.~Yin, and M.~Zamani.
\newblock Opacity of nondeterministic transition systems: A (bi) simulation
  relation approach.
\newblock {\em IEEE Transactions on Automatic Control}, 64(12):5116--5123,
  2019.

\bibitem{zhou2006control}
C.~Zhou, R.~Kumar, and S.~Jiang.
\newblock Control of nondeterministic discrete-event systems for bisimulation
  equivalence.
\newblock {\em IEEE Transactions on Automatic Control}, 51(5):754--765, 2006.

\bibitem{zinck2020enforcing}
G.~Zinck, L.~Ricker, H.~Marchand, and L.~H{\'e}lou{\"e}t.
\newblock Enforcing opacity in modular systems.
\newblock In {\em IFAC World Congress}, 2020.

\end{thebibliography}

\end{document}